\begin{document}

\pagestyle{empty}
\newtheorem{theorem}{Theorem}[section]
\newtheorem{cor}[theorem]{Corollary}
\newtheorem{lemma}[theorem]{Lemma}
\newtheorem{fact}[theorem]{Fact}
\newtheorem{property}[theorem]{Property}
\newtheorem{proposition}[theorem]{Proposition}
\newtheorem{claim}[theorem]{Claim}
\newtheorem{conjecture}[theorem]{Conjecture}
\newtheorem{question}[theorem]{Question}
\newtheorem{definition}[theorem]{Definition}
\theoremstyle{definition}
\newtheorem{example}[theorem]{Example}
\newtheorem{remark}[theorem]{Remark}

\newcommand\eps{\varepsilon}

\def\St{{\mathcal S}}
\def\K{{\mathcal C}}
\def\R{{\mathcal R}}

\title{Fixed parameter algorithms for restricted coloring problems}

\author[V. Campos]{Victor Campos}
\address{Universidade Federal do Cear\'a, Fortaleza, Brazil}
\email{\tt campos@lia.ufc.br}

\author[C. Linhares-Sales]{Cl\'audia Linhares-Sales}
\address{Universidade Federal do Cear\'a, Fortaleza, Brazil}
\email{\tt linhares@lia.ufc.br}

\author[A. K. Maia]{Ana Karolinna Maia}
\address{Universidade Federal do Cear\'a, Fortaleza, Brazil}
\email{\tt karol@lia.ufc.br}

\author[N. Martins]{Nicolas Martins}
\address{Universidade Federal do Cear\'a, Fortaleza, Brazil}
\email{\tt nicolasam@lia.ufc.br}

\author[R. Sampaio]{Rudini M. Sampaio}
\address{Universidade Federal do Cear\'a, Fortaleza, Brazil}
\email{\tt rudini@ufc.br}

\thanks{The authors are partially supported by Funcap (Proc. 07.013.00/09) and CNPq (Proc.~484154/2010-9)}

\date{}
\maketitle

\begin{abstract}
In this paper, we obtain polynomial time algorithms to determine the acyclic chromatic number, the star chromatic number, the Thue chromatic number, the harmonious chromatic number and the clique chromatic number of $P_4$-tidy graphs and $(q,q-4)$-graphs, for every fixed $q$. These classes include cographs, $P_4$-sparse and $P_4$-lite graphs. All these coloring problems are known to be NP-hard for general graphs. These algorithms are fixed parameter tractable on the parameter $q(G)$, which is the minimum $q$ such that $G$ is a $(q,q-4)$-graph. We also prove that every connected $(q,q-4)$-graph with at least $q$ vertices is 2-clique-colorable and that every acyclic coloring of a cograph is also nonrepetitive.
\end{abstract}



\section{Introduction}

Let $G=(V,E)$ be a finite undirected graph, without loops and multiple edges. The complete bipartite graph with partitions of size $m$ and $n$ is denoted by $K_{m,n}$. A $K_{1,n}$ is called a star. A $P_4$ is an induced path with four vertices. A cograph is any $P_4$-free graph. The graph terminology used here follows {\cite{BM08}}.

A $k$-coloring of $G$ is a partition $\{V_1,\ldots,V_k\}$ of $V(G)$. The subsets $V_1,\ldots,V_k$ are called \emph{color classes} and we say that a vertex in $V_i$ is colored $i$. A proper $k$-coloring is a $k$-coloring such that every color class induces a stable set. The chromatic number $\chi(G)$ of $G$ is the smallest integer $k$ such that $G$ admits a proper $k$-coloring.

An \emph{acyclic} coloring is a proper coloring such that every cycle receives at least three colors (that is, every pair of color classes induces a forest).
A \emph{star} coloring is an acyclic coloring such that every $P_4$ receives at least three colors (that is, every pair of color classes induces a forest of stars).
A \emph{nonrepetitive} coloring is a star coloring such that no path has a $xx$ pattern of colors, where $x$ is a sequence of colors.
A \emph{harmonious} coloring is a nonrepetitive coloring such that every pair of color classes induces at most one edge.

It is easy to see that any coloring of a split or chordal graph is acyclic. In 2011, Lyons \cite{lyons11} proved that every acyclic coloring of a cograph is also a star coloring. In this paper, we prove that it is also nonrepetitive.

The acyclic, star, Thue and harmonious chromatic numbers of $G$, denoted respectively by $\chi_a(G)$, $\chi_{st}(G)$, $\pi(G)$, $\chi_h(G)$, are the minimum number of colors $k$ such that $G$ admits an acyclic, star, nonrepetitive and harmonious coloring with $k$ colors. By definitions, $$\chi(G)\ \leq\ \chi_a(G)\ \leq\ \chi_{st}(G)\ \leq\ \pi(G)\ \leq\ \chi_h(G).$$

Determining the acyclic chromatic number is NP-Hard even for bipartite graphs \cite{coleman86} and deciding if $\chi_a(G)\leq 3$ is NP-Complete \cite{kostochka78}.
In 2004, Albertson et al. \cite{albertson04} proved that computing the star chromatic number is NP-hard even for planar bipartite graphs.
In 2007, Asdre et al. \cite{asdre07} proved that determining  the harmonious chromatic number is NP-hard for interval graphs, permutation graphs and split graphs.

Borodin proved that $\chi_a(G)\leq 5$ for every planar graph $G$ \cite{borodin79}.
In 2004, Fertin, Raspaud and Reed give exact values of $\chi_{st}(G)$ for several graph classes \cite{reed04}. In 2004, Campbell and Edwards \cite{campbell04} obtained new lower bounds for $\chi_h(G)$ in terms of the independence number.

In 2002, Alon et al. \cite{alon02} proved a relation between the $\pi(G)$ and $\Delta(G)$.
In 2008, Barát and Wood \cite{barat08} proved that every graph $G$ with treewidth $t$ and maximum degree $\Delta$ satisfies $\pi(G)=O(k\Delta)$ (it was also proved that $\pi(G)\leq 4^t$ \cite{gryt07}). 
In 2009, Marx and Schaefer \cite{marx09} proved that determining whether a particular coloring of a graph is nonrepetitive is coNP-hard, even if the number of colors is limited to four.
In 2010, Grytczuk et al. \cite{gryt10} investigated list colorings which are nonrepetitive and proved that the Thue choice number of $P_n$ is at most 4 for every $n$. See \cite{gryt07} and \cite{gryt08} for a survey on nonrepetitive colorings.

A \emph{clique} coloring is a coloring (not necessarily a proper coloring) such that every maximal clique receives at least two colors. The clique chromatic number $\chi_c(G)$ is the minimum number $k$ such that $G$ has a clique coloring with $k$ colors.

In 2002, Kratochv\'il and Tuza \cite{kratochvil02} proved that determining the clique-chromatic number is polynomial time solvable for planar graphs, but is NP-Hard for perfect graphs. In 2004, Bacs\'o et al. \cite{bacso04} proved several results for 2-clique-colorable graphs.  

Many NP-hard problems were proved to be polynomial time solvable for cographs. For example, Lyons \cite{lyons11} obtained a polynomial time algorithm to find an optimal acyclic and an optimal star coloring of a cograph. However, it is known that computing the harmonious chromatic number of a disconnected cograph is NP-hard \cite{bodlaender89}.

Some superclasses of cographs, defined in terms of the number and structure of its induced $P_4$'s, can be completely characterized by their modular or primeval decomposition. Among these classes, we cite $P_4$-sparse graphs, $P_4$-lite graphs, $P_4$-tidy graphs and $(q,q-4)$-graphs.

Babel and Olariu \cite{olariu98} defined a graph as $(q,q-4)$-\emph{graph} if no set of at most $q$ vertices induces more than $q-4$ distinct $P_4$'s. Cographs and $P_4$-sparse graphs are precisely $(4,0)$-graphs and $(5,1)$-graphs respectively. $P_4$-lite graphs are special $(7,3)$-graphs.
We say that a graph is $P_4$-tidy if, for every $P_4$ induced by $\{u,v,x,y\}$, there exists at most one vertex $z$ such that $\{u,v,x,y,z\}$ induces more than one $P_4$.
Since the complement of a $P_4$ is also a $P_4$, these graph classes are closed under complementation.

In this paper, we prove the following result:

\begin{theorem}[main theorem]\label{teo-main1}
Let $q$ be a fixed integer and let $G$ be a $P_4$-tidy or a $(q,q-4)$-graph. There exists linear time algorithms to obtain
\begin{itemize}
\item $\chi_a(G)$, $\chi_{st}(G)$, $\pi(G)$ and $\chi_c(G)$;
\item $\chi_h(G)$, if $G$ is also connected.
\end{itemize}
Moreover, every connected $(q,q-4)$-graph with at least $q$ vertices is 2-clique-colorable, and every acyclic coloring of a cograph is also nonrepetitive.
\end{theorem}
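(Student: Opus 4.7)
The plan is to prove the three assertions of the theorem by induction along the primeval decomposition of Jamison--Olariu, extended by Babel--Olariu: every $(q,q-4)$-graph is either disconnected, has disconnected complement, or is $p$-connected with a prime subgraph of at most $q$ vertices to which modules are attached in a controlled way; $P_4$-tidy graphs admit an analogous refinement featuring spiders. For each assertion I would treat the combinators (disjoint union, join, small prime block) separately.

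For the cograph claim, I induct on the cotree. At a disjoint union $G = G_1 \cup G_2$ any path of $G$ lives entirely in one side, so the induction hypothesis transfers. At a join $G = G_1 \vee G_2$, the colors used by $G_1$ and $G_2$ must be disjoint, and for any color class $A \subseteq V(G_1)$ and $B \subseteq V(G_2)$ the induced subgraph is the complete bipartite graph $K_{|A|,|B|}$; acyclicity forces $|A| = 1$ or $|B| = 1$, hence either every color class in $G_1$ is a singleton or every color class in $G_2$ is. Without loss of generality assume the former. If a path $v_1 v_2 \cdots v_{2k}$ exhibited an $xx$ color pattern and some $v_i$ lay in $G_1$, then $v_{i+k}$ would share the unique color of $v_i$, forcing $v_{i+k} = v_i$, a contradiction. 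Hence the whole path lies in $G_2$, where the induction hypothesis closes the case; the subcase $k = 1$ is ruled out already by properness.

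For 2-clique-colorability of a connected $(q,q-4)$-graph $G$ with $|V(G)| \geq q$, I again split by the decomposition. If $G = G_1 \vee G_2$, every maximal clique of $G$ is the union of a maximal clique of $G_1$ with a maximal clique of $G_2$, so coloring all of $G_1$ with $1$ and all of $G_2$ with $2$ is a valid $2$-clique-coloring. If both $G$ and $\bar G$ are connected, Babel--Olariu gives a prime subgraph of at most $q$ vertices together with a small system of attached modules; since $|V(G)| \geq q$ at least one module is non-trivial, and the size hypothesis allows us to plant both colors in every maximal clique, the precise assignment being dictated by a bounded case analysis on the possible prime patterns.

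For the linear-time algorithms, I would compute the primeval decomposition in $O(n+m)$ time and run a bottom-up dynamic program on the resulting tree. At each internal node only a constant amount of data needs to be maintained for fixed $q$: for $\chi_a$, $\chi_{st}$, $\pi$ and $\chi_c$ a short table of feasible color-profile signatures suffices, while for $\chi_h$ the signature must additionally record which pairs of colors have already been used by an edge. Prime nodes of at most $q$ vertices are handled by brute force in time depending only on $q$. The main obstacle I anticipate lies with $\chi_h$: the harmonious constraint couples color-pair usages globally, so the combination step at a join is the delicate point, and the connectedness hypothesis is essential --- without it the problem is already NP-hard on cographs by Bodlaender's result.
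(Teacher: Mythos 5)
Your overall skeleton (primeval/modular decomposition, bottom-up processing, brute force on prime parts of at most $q$ vertices) matches the paper, and your cotree induction for the cograph claim is correct and essentially the argument the paper relies on (at a join, acyclicity forces one side to be rainbow, so a repetition pattern would have to avoid that side). But the algorithmic part has a genuine gap: the paper's proof consists precisely of exact combination formulas for each operation, and these are what you never supply. For $\chi_a,\chi_{st},\pi$ the point is that at a join one side must receive pairwise distinct colors, giving $\min\{\pi(G_1)+n_2,\pi(G_2)+n_1\}$, so a single number per node suffices; for spiders and quasi-spiders one needs the explicit case analysis (thin versus thick, $R=\emptyset$ or not, $H=K_2$ or $\overline{K_2}$ in $C$ or $S$); and for the ``separable $p$-component'' case one needs the structural fact (Lemma~\ref{lemma-split}) that the characteristic graph of $H$ is split, which is what rules out bichromatic $P_4$'s and repetitive paths crossing between $G-H$, $H_1$ and $H_2$ and lets one combine a brute-forced coloring of $H$ with the value for $G-H$. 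Your substitute --- a ``table of feasible color-profile signatures'' --- is not shown to be of bounded size, and for $\chi_h$ it demonstrably is not: recording which pairs of colors are already used by an edge is $\Theta(n^2)$ data, so the claimed constant-size state and hence linear time do not follow. Moreover, the obstacle you flag at the join for $\chi_h$ does not exist: in a join every two vertices on the same side have a common neighbor, so all colors are distinct and $\chi_h=n$; similarly $\chi_h$ of a quasi-spider and of the separable-component case are given by closed formulas at the root, with no recursion. Connectivity matters only because a disconnected root (union) would make $\chi_h$ NP-hard, not because the join step is delicate.

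The 2-clique-colorability assertion is also not proved. Your join case is fine, but for the remaining cases you only say that a ``bounded case analysis on the possible prime patterns'' lets you ``plant both colors in every maximal clique,'' which is the statement to be proved, not an argument. The paper handles these cases with explicit colorings: for a (quasi-)spider, color $C$ with one color and $R\cup S$ with the other when $R\neq\emptyset$ (with a separate construction when $R=\emptyset$); for a separable $p$-component $H$ with $G-H\neq\emptyset$, color $H_1$ with one color and $H_2\cup(G-H)$ with the other, where the split-graph structure of the characteristic graph of $H$ is needed to show that no maximal clique lies entirely inside $H_2$. Also, the role of the hypothesis $|V(G)|\geq q$ is not that ``some module is non-trivial''; it is that the base case of the decomposition (a graph on fewer than $q$ vertices, where $2$ colors may not suffice) cannot occur at the root, so the connected graph must be a join, a (quasi-)spider, or have a separable $p$-component with $G-H\neq\emptyset$, and in each of these cases the explicit $2$-clique-coloring applies.
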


Let $q(G)$ be the minimum integer $q$ such that $G$ is a $(q,q-4)$-graph. Theorem \ref{teo-main1} proves that the acyclic, the star, the nonrepetitive, the harmonious and the clique coloring problems are fixed parameter tractable on the parameter $q(G)$.

\section{Primeval and Modular decompositions}\label{secao-primeval}

Let $G_1=(V_1,E_1)$ and $G_2=(V_2,E_2)$ be two vertex disjoint graphs.
The disjoint union of $G_1$ and $G_2$ is the graph $G_1\cup G_2=(V_1\cup V_2,E_1\cup E_2)$. The join is the graph $G_1\vee G_2=(V_1\cup V_2,E_1\cup E_2\cup \{uv:\ u\in V_1,\ v\in V_2\})$. 

A \emph{spider} is a graph whose vertex set has a partition $(R,C,S)$, where $C=\{c_1,\ldots,c_k\}$ and $S=\{s_1,\ldots,s_k\}$ for $k\geq 2$ are respectively a 
clique and a stable set; $s_i$ is adjacent to $c_j$ if and only if $i=j$ (a thin spider), or $s_i$ is adjacent to $c_j$ if and only if $i\not=j$ (a thick spider); and every vertex of $R$ is adjacent to each vertex of $C$ and non-adjacent to each vertex of $S$. 


Jamison and Olariu \cite{olariu98} proved an important structural theorem for $(q,q-4)$-graphs, using their primeval decomposition, which can be obtained in linear time.
A graph is {\it $p$-connected} if, for every bipartition of the vertex set, there is a crossing $P_4$. A {\it separable p-component} is a maximal p-connected subgraph with a particular bipartition $(H_1,H_2)$ such that every crossing $P_4$ $wxyz$ satisfies $x,y\in H_1$ and $w,z\in H_2$.

\begin{theorem}[Characterizing $(q,q-4)$-graphs \cite{olariu98}]\label{teo-primeval}
A graph $G$ is a $(q,q-4)$-graph if and only if exactly one of the following holds:
\begin{itemize}
\item [(a)] $G$ is the union or the join of two $(q,q-4)$-graphs;
\item [(b)] $G$ is a spider $(R,C,S)$ and $G[R]$ is a $(q,q-4)$-graph; 
\item[(c)] $G$ contains a separable $p$-component $H$, with bipartition $(H_1,H_2)$ and $|V(H)| \leq q$, such that $G-H$ is a $(q,q-4)$-graph and every vertex of $G-H$ is adjacent to every vertex of $H_1$ and non-adjacent to every vertex of $H_2$;
\item[(d)] $G$ has at most $q$ vertices or $V(G)=\emptyset$.
\end{itemize}
\end{theorem}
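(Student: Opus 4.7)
The plan is to prove both directions by structural induction on $|V(G)|$, with case (d) serving as the base. The "if" direction amounts to checking that each of the four constructions preserves the $(q,q-4)$ property; the "only if" direction is the substantive structural content and invokes the $p$-component theory of Jamison and Olariu.

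For the "if" direction I would first establish a stronger auxiliary invariant: in a $(q,q-4)$-graph, every induced subgraph on $k\leq q$ vertices contains at most $\max(0,k-4)$ copies of $P_4$. This sharper bound is inherited through the recursive constructions and is what actually makes case (a) work. For the disjoint union every induced $P_4$ in $G_1\cup G_2$ lies entirely in one side, so a subset $S=S_1\sqcup S_2$ with $|S|\leq q$ satisfies $|P_4(S_1)|+|P_4(S_2)|\leq (|S_1|-4)+(|S_2|-4)\leq q-8$ when both parts have at least four vertices, and reduces to the one-sided bound otherwise. Case (a) for the join follows by complementation, using the observation (noted in the excerpt) that the class is closed under complement. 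For case (b) I would use the rigid spider structure: any induced $P_4$ in a thin or thick spider either lies in $R$ or uses one matched pair $(s_i,c_i)$ together with two further vertices of $C\cup R$ in prescribed patterns, so the $P_4$-count on any $q$-subset is controlled by the $(q,q-4)$ hypothesis on $G[R]$. Case (c) is similar: the universal/anti-complete adjacency of $G-H$ to $H_1/H_2$ forces every crossing $P_4$ to pick both internal vertices in $H_1$ and both endpoints in $H_2$, and then $|V(H)|\leq q$ combined with induction on $G-H$ keeps the count within $q-4$.

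For the "only if" direction I would induct on $|V(G)|$. If $|V(G)|\leq q$ we are in case (d); and if $G$ or $\overline{G}$ is disconnected, each component/co-component inherits the $(q,q-4)$ property and we land in case (a). The interesting case is when $G$ is both connected and co-connected with more than $q$ vertices, and here I would invoke the theory of $p$-connected components: $G$ admits a unique partition into maximal $p$-connected subgraphs, and either $G$ itself is $p$-connected or it contains a separable $p$-component. If $G$ is $p$-connected, the $(q,q-4)$ hypothesis restricts the number of $P_4$'s so severely that $G$ must admit the spider partition $(R,C,S)$, with the $C$-$S$ adjacency forced to be either the thin or the thick matching and with $G[R]$ inheriting the $(q,q-4)$ property, giving case (b). If instead $G$ has a proper separable $p$-component $H$ with bipartition $(H_1,H_2)$, a counting argument shows $|V(H)|\leq q$ (otherwise a $q$-subset of $H$ would already exhibit more than $q-4$ induced $P_4$'s), and the definition of a separable $p$-component forces every vertex outside $H$ to be universal to $H_1$ and non-adjacent to $H_2$, since any other adjacency would yield a new crossing $P_4$ contradicting the maximality/separability of $H$; this gives case (c), and $G-H$ is a $(q,q-4)$-graph by induction.

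The main obstacle will be the spider classification inside a $p$-connected $(q,q-4)$-graph with more than $q$ vertices. This is the structural heart of the theorem: one must show that so few $P_4$'s on every $q$-subset prevents any "generic" $p$-connected configuration and forces the rigid $(R,C,S)$ partition with either the thin ($s_ic_i$ only) or thick (all but $s_ic_i$) $C$-$S$ matching. I would approach this by identifying, inside $G$, which vertices can serve as endpoints and which as interior vertices of induced $P_4$'s, and arguing that the scarcity of $P_4$'s forces these roles to cluster consistently into $S$ and $C$ respectively, with the remaining vertices comprising $R$. Once this spider theorem is established, the reductions in the remaining cases are comparatively mechanical.
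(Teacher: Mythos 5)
First, be aware that the paper does not prove this statement at all: Theorem~\ref{teo-primeval} is quoted from Babel and Olariu \cite{olariu98}, so there is no internal proof to compare yours against; you are attempting to reconstruct a substantial external structure theorem. Measured on its own terms, your sketch has concrete gaps. The ``sharper auxiliary invariant'' on which you base the whole ``if'' direction --- that in a $(q,q-4)$-graph every set of $k\leq q$ vertices induces at most $\max(0,k-4)$ induced $P_4$'s --- is false: $C_5$ is a $(q,q-4)$-graph for every $q\geq 9$ (it has only $5$ induced $P_4$'s in total), yet its $5$ vertices carry $5>1$ of them. With that invariant gone, the additive bound $|P_4(S_1)|+|P_4(S_2)|\leq(|S_1|-4)+(|S_2|-4)$ for the disjoint union has no support, and in fact the literal ``if'' direction cannot be proved by any argument: $C_5\cup C_5$ satisfies (a) with $q=9$, but a $9$-subset consisting of one whole $C_5$ plus four vertices of the other induces $5+1=6>5$ $P_4$'s; likewise a headless thin spider with $k$ legs has $\binom{k}{2}$ induced $P_4$'s on $2k$ vertices, so choosing $2k\leq q$ with $\binom{k}{2}>q-4$ produces a graph satisfying (b) that is not a $(q,q-4)$-graph. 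The statement must be read as a description of the primeval decomposition of $(q,q-4)$-graphs --- its usable content is the ``only if'' direction, which is how this paper applies it --- and a plan that sets out to verify the converse by counting $P_4$'s is attacking something false.

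Second, in the ``only if'' direction the two load-bearing claims are exactly the ones you defer. That a $p$-connected $(q,q-4)$-graph with $q\geq 8$ (in particular, a separable $p$-component of one) has at most $q$ vertices is itself a theorem of \cite{olariu98}, restated in Section~\ref{secao-primeval} of this paper; it is not something ``a counting argument shows'' in one line. And the identification of the remaining case as a spider with a $(q,q-4)$ body is what you yourself label ``the main obstacle,'' for which you offer only a heuristic about clustering endpoint and interior roles of $P_4$'s. A correct route starts from the Jamison--Olariu structure theorem for arbitrary graphs (\cite{olariu95}: every graph is disconnected, co-disconnected, $p$-connected, or has a unique proper separable $p$-component whose outside is complete to $H_1$ and anticomplete to $H_2$) and then layers the $(q,q-4)$-specific size bound and spider analysis on top. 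Your sketch gestures at this architecture but supplies neither of the two nontrivial ingredients, so as it stands it is not a proof.
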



Using the modular decomposition of $P_4$-tidy graphs, Giakoumakis et al. proved a similar result for this class \cite{giakoumakis97}. A \emph{quasi-spider} is a graph obtained from a spider $(R,C,S)$ by replacing at most one vertex from $C\cup S$ by a $K_2$ (the complete graph on two vertices) or a $\overline{K_2}$ (the complement of $K_2$).

\begin{theorem}[Characterizing $P_4$-tidy graphs \cite{giakoumakis97}]\label{teo-primeval2}
A graph $G$ is a $P_4$-tidy graph if and only if exactly one of the following holds:
\begin{itemize}
\item [(a)] $G$ is the union or the join of two $P_4$-tidy graphs;
\item [(b)] $G$ is a quasi-spider $(R,C,S)$ and $G[R]$ is a $P_4$-tidy graph; 
\item [(c)] $G$ is isomorphic to $P_5$, $\overline{P_5}$, $C_5$, $K_1$ or $V(G)=\emptyset$.
\end{itemize}
\end{theorem}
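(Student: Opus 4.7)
The plan is to prove both directions of the characterization.

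\textbf{Sufficiency.} For each construction in (a)--(c) I would verify the $P_4$-tidy condition directly. In a disjoint union or join of two graphs, any induced $P_4$ lies entirely in a single summand: in a join $G_1 \vee G_2$, the non-edge $ad$ of a $P_4$ $abcd$ forces $a$ and $d$ to the same side, and then the non-edges $ac$ and $bd$ pull $b$ and $c$ to that side as well. Hence the ``at most one bad fifth vertex'' condition is inherited from the summands, establishing (a). The small cases $P_5,\overline{P_5},C_5,K_1$ in (c) are handled by listing their $P_4$s. For a quasi-spider in (b), I would enumerate the $P_4$s (those internal to $G[R]$; the spider $P_4$s of the form $s_ic_ic_js_j$ for a thin spider, or $s_ic_jc_is_k$ for a thick spider; and the finitely many extra $P_4$s created by the substituted $K_2$ or $\overline{K_2}$) and observe that the only candidate bad fifth vertex for each spider $P_4$ is the twin produced by the substitution, so the tidy bound of one is met.

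\textbf{Necessity.} I would induct on $|V(G)|$. If $G$ is disconnected, its components are induced subgraphs, hence $P_4$-tidy, and $G$ falls into case (a) via disjoint union; the same argument applied to $\overline{G}$ handles the co-disconnected case via join. Otherwise both $G$ and $\overline{G}$ are connected, and I would invoke Gallai's modular decomposition. Let $Q$ be the prime quotient whose vertices correspond to the maximal strong modules of $G$; each such module is an induced subgraph and therefore $P_4$-tidy, and will be dealt with by induction. The remaining task is to show (i) that $Q$ is either a spider or one of $P_5,\overline{P_5},C_5$, and (ii) that the substitution of modules into $Q$ produces at most one non-singleton module among the $C \cup S$ positions, of size exactly two, yielding either a quasi-spider in case (b) or one of the exceptions in case (c).

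\textbf{Main obstacle: classifying prime $P_4$-tidy graphs.} Assume $Q$ is prime with $|V(Q)| \geq 6$. Fix a $P_4 = abcd$ in $Q$. For every other vertex $v$, the tidy condition permits at most one $v$ for which $\{a,b,c,d,v\}$ induces more than one $P_4$. A case analysis on the sixteen possible adjacency patterns of $v$ to $\{a,b,c,d\}$ shows that all other $v$ must fall into one of three safe roles: complete adjacency or complete non-adjacency (an $R$-role), neighbourhood exactly $\{b,c\}$ (a $C$-role), or neighbourhood exactly $\{b\}$ or exactly $\{c\}$ (an $S$-role). Propagating this classification across all $P_4$s of $Q$ and using primeness of $Q$ to forbid unrestricted twin pairs forces the spider matching (thin or thick) between $C$ and $S$; the one controlled deviation permitted by the ``at most one bad fifth vertex'' bound is a single pair of true or false twins inside $C \cup S$, which is precisely the quasi-spider substitution. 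Making this global case analysis airtight—in particular verifying that the unique permitted twin pair is confined to a single module of size two and that no larger module can occur at a $C$- or $S$-position—is the principal technical hurdle, and is where the tidy hypothesis is used most heavily.
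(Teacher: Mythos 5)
This theorem is quoted from Giakoumakis, Roussel and Thuillier \cite{giakoumakis97}; the paper itself gives no proof, so there is no in-paper argument to compare yours against. Your overall strategy (sufficiency by direct verification, necessity via Gallai's modular decomposition plus a classification of the prime quotients) is the same route the original reference takes, and the sufficiency half of your sketch is essentially right: $P_4$'s do not cross a union or a join, a universal or isolated fifth vertex creates no new $P_4$, and the quasi-spider enumeration works out.

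The necessity half, however, contains a concrete error and a large acknowledged hole. The error: for a $P_4$ $abcd$, a fifth vertex $v$ with $N(v)\cap\{a,b,c,d\}=\{b\}$ is \emph{not} safe, since $\{v,b,c,d\}$ induces a second $P_4$; such a $v$ is therefore already one of the ``bad fifth vertices'' that the tidy condition permits at most once. The only neighbourhoods of $v$ inside $\{a,b,c,d\}$ that create no new $P_4$ are $\emptyset$, $\{b,c\}$ and $\{a,b,c,d\}$ --- in a spider, the legs other than the two appearing in a given spider-$P_4$ meet that $P_4$ in the \emph{empty} set, not in a single midpoint. So your three ``safe roles'' misclassify the $S$-vertices, and the propagation argument built on that classification would go astray. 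Beyond this, the statement you defer as ``the principal technical hurdle'' --- that every prime $P_4$-tidy graph on at least five vertices is $P_5$, $\overline{P_5}$, $C_5$ or a spider with $|R|\le 1$, and that at most one module of size exactly two may be substituted at a $C\cup S$ position --- is the entire mathematical content of the theorem, so what you have is an outline of the correct plan rather than a proof. Either carry out that prime-graph classification in full (with the corrected safe set) or cite \cite{giakoumakis97} as the paper does.
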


As a consequence, a $(q,q-4)$-graph (resp. a $P_4$-tidy graph) $G$ can be decomposed by successively applying Theorem \ref{teo-primeval} 
(resp. Theorem \ref{teo-primeval2}) as follows:
If (a) holds, apply the theorem to each component of $G$ or $\overline{G}$ (operations disjoint union and join). If (b) holds, 
apply the theorem to $G[R]$ (operation spider or quasi-spider). Finally, if (c) holds and $G$ is a $(q,q-4)$-graph,
then apply the theorem to $G-H$ (operation small subgraph).

It was also proved in \cite{olariu98} that every p-connected $(q,q-4)$-graph with $q\geq 8$ has at most $q$ vertices. With this, we can obtain $q(G)$ in $O(n^7)$ time for every graph $G$ from its primeval decomposition (observe that $q(G)$ can be greater than $n$ and, if this is the case, $q(G)$ is the number of induced $P_4$'s of $G$ plus four).

The idea now is to consider the graph by the means of its decomposition tree obtained as described. According to the coloring parameter to
be determined, the tree will be visited in an up way or bottom way fashion.
We notice that the primeval and modular decomposition of any graph can be obtained in linear time \cite{olariu98}.

\section{Disjoint Union, Join and Spiders}\label{secao-operacoes}

We start by recalling a result from \cite{lyons11} for the acyclic and the star chromatic numbers.

\begin{lemma}[$\chi_a$ and $\chi_{st}$ for union and join \cite{lyons11}]\label{teo-lyons}
Given graphs $G_1$ and $G_2$ with $n_1$ and $n_2$ vertices res-pectively:
$$\chi_a(G_1\cup G_2)=\max\{\chi_a(G_1),\chi_a(G_2)\},\ \ \ \ \ \ \ \ \ \ \ \ \ \ \ $$
$$\chi_{st}(G_1\cup G_2)=\max\{\chi_{st}(G_1),\chi_{st}(G_2)\},\ \ \ \ \ \ \ \ \ \ \ \ \ $$
$$\chi_a(G_1\vee G_2)=\min\{\chi_a(G_1)+n_2,\chi_a(G_2)+n_1\},\ \ $$
$$\chi_{st}(G_1\vee G_2)=\min\{\chi_{st}(G_1)+n_2,\chi_{st}(G_2)+n_1\}.$$
\end{lemma}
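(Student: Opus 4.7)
The plan is to prove the four identities separately, handling the disjoint union cases first (routine) and then the join cases, where a short structural observation does the main work.

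For the disjoint union identities, the lower bound $\chi_a(G_1\cup G_2)\geq \max\{\chi_a(G_1),\chi_a(G_2)\}$ (resp.\ for $\chi_{st}$) is immediate: restricting an acyclic (resp.\ star) coloring of $G_1\cup G_2$ to either $V(G_i)$ yields an acyclic (resp.\ star) coloring of $G_i$. For the reverse inequality, I would take optimal colorings of $G_1$ and $G_2$ with a common palette $\{1,\dots,M\}$ where $M=\max\{\chi_a(G_1),\chi_a(G_2)\}$, and combine them. Since there are no edges between $V(G_1)$ and $V(G_2)$, every cycle (and every $P_4$) of $G_1\cup G_2$ lies entirely inside one of the $G_i$, where the respective constraints are satisfied by hypothesis.

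For the join identities, the key is the following structural observation that I would prove first: in any acyclic coloring of $G_1\vee G_2$, either every color class meeting $V(G_1)$ is a singleton, or every color class meeting $V(G_2)$ is a singleton. Indeed, if colors $\alpha$ and $\beta$ had two vertices $u,v\in V(G_1)$ and $x,y\in V(G_2)$ respectively, then $u-x-v-y-u$ would be a $4$-cycle using only the two colors $\alpha,\beta$, contradicting acyclicity (and a fortiori the star condition). Since all edges between $V(G_1)$ and $V(G_2)$ are present, any proper coloring has disjoint palettes on the two sides, so the total number $k$ of colors used equals $k_1+k_2$ where $k_i$ is the number of colors used on $V(G_i)$. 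The observation above forces $k_1=n_1$ or $k_2=n_2$; in the latter case the restriction of the coloring to $G_1$ is still acyclic (resp.\ star), giving $k_1\geq \chi_a(G_1)$ (resp.\ $\chi_{st}(G_1)$), hence $k\geq \chi_a(G_1)+n_2$. Symmetrically in the other case, establishing the lower bound.

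For the upper bound, I would exhibit a coloring: take an optimal acyclic (resp.\ star) coloring of, say, $G_1$ with $\chi_a(G_1)$ colors, and give each vertex of $G_2$ a fresh private color, totalling $\chi_a(G_1)+n_2$ colors; take the minimum over the two symmetric constructions. To verify acyclicity, any bichromatic cycle uses each of its two colors on at least two vertices, so neither color can be a private $G_2$-color; hence both colors live in $G_1$ and so does the cycle, contradicting the acyclic coloring of $G_1$. The identical argument with $P_4$ in place of cycle handles the star case. The main obstacle is purely the structural lemma above about forced singleton classes on one side — everything else is bookkeeping; I would present the lemma cleanly and reuse it for both $\chi_a$ and $\chi_{st}$ to avoid duplication.
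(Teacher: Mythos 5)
Your proof is correct. The paper does not actually prove this lemma---it is quoted from Lyons \cite{lyons11}---but your argument is the standard one and matches the spirit of the paper's own proof of the nonrepetitive analogue (Lemma \ref{lema-nonrep}): your bichromatic $C_4$ on $u,x,v,y$ across the join plays exactly the role of the bicolored $P_4$ $a_1a_2b_1b_2$ there, forcing all vertices on one side of the join to receive distinct colors, and the union case together with the upper-bound constructions (private colors on the other side, restriction/combination arguments) is routine bookkeeping, as you say.
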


The next lemma shows how to obtain the Thue chromatic number for union and join operations. It is easy to see that Lemmas \ref{teo-lyons} and \ref{lema-nonrep} implies that, if $G$ is a cograph, then $\pi(G)=\chi_{st}(G)=\chi_a(G)$ and every acyclic coloring of a cograph is also nonrepetitive.

\begin{lemma}[$\pi(G)$ for union and join]\label{lema-nonrep}
Given graphs $G_1$ and $G_2$ with $n_1$ and $n_2$ vertices res-pectively:
$$\pi(G_1\cup G_2)=\max\{\pi(G_1),\pi(G_2)\},\ \ \ \ \ \ \ \ \ \ \ \ \ $$
$$\pi(G_1\vee G_2)=\min\{\pi(G_1)+n_2,\pi(G_2)+n_1\}.$$
\end{lemma}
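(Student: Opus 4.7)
The union case should be dispatched first as a clean, short argument. The lower bound $\pi(G_1\cup G_2)\geq \max\{\pi(G_1),\pi(G_2)\}$ is immediate because $G_1$ and $G_2$ are induced subgraphs. For the matching upper bound I would take optimal nonrepetitive colorings of $G_1$ and $G_2$ on a \emph{common} palette of size $\max\{\pi(G_1),\pi(G_2)\}$; since every path of $G_1\cup G_2$ lies entirely in one component, the nonrepetitive (and hence star and proper) conditions carry over trivially.

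For the join, the plan is to prove the two bounds separately. For the upper bound, assume without loss of generality that $\pi(G_1)+n_2\leq \pi(G_2)+n_1$, color $G_1$ nonrepetitively with a palette $A$ of size $\pi(G_1)$, and assign every vertex of $G_2$ a private fresh color from a palette $B$ disjoint from $A$ with $|B|=n_2$. The verification has three parts: (i) properness is automatic from $A\cap B=\emptyset$ together with the proper colorings inside each side; (ii) for the star/acyclic condition I would first observe that every induced $P_4$ (resp.\ every induced cycle) of $G_1\vee G_2$ is contained in $G_1$ or in $G_2$, because any such subgraph straddling the bipartition would pick up extra join edges as chords and cease to be induced, so the conditions reduce to each side, where $G_2$ uses distinct colors on every vertex; (iii) for nonrepetitiveness, any path that contains some vertex $v\in V(G_2)$ cannot exhibit an $xx$ pattern, since the color of $v$ is used nowhere else in $G_1\vee G_2$ and thus cannot be matched at the shifted position, while any path avoiding $V(G_2)$ lies in $G_1$ and is handled by $G_1$'s coloring.

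The main obstacle, in my view, is the lower bound for the join. Fix an optimal nonrepetitive coloring of $G_1\vee G_2$ with $k$ colors, and let $A$ and $B$ be the color sets used on $V(G_1)$ and $V(G_2)$ respectively. Proper coloring across every join edge forces $A\cap B=\emptyset$, so $k\geq |A|+|B|\geq \pi(G_1)+\pi(G_2)$, but this alone is too weak. The core combinatorial step is to rule out simultaneous color repetitions on both sides: if a color $c_1$ appeared at distinct $u_1,u_2\in V(G_1)$ and a color $c_2$ at distinct $v_1,v_2\in V(G_2)$, then the alternating path $u_1\,v_1\,u_2\,v_2$ is legal (each of the three edges is a join edge) and its color sequence $c_1c_2c_1c_2$ is an $xx$ pattern with $x=c_1c_2$, contradicting the nonrepetitive property. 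Hence either every vertex of $G_2$ receives a distinct color, giving $|B|=n_2$ and $k\geq \pi(G_1)+n_2$, or symmetrically $k\geq \pi(G_2)+n_1$, yielding the desired minimum. The observation at the start of the section, that combining Lemma~\ref{teo-lyons} with this lemma collapses $\chi_a,\chi_{st},\pi$ on cographs, then follows by a straightforward induction on the cotree.
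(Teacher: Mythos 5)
Your proposal follows essentially the same route as the paper: the union case is the same easy observation, and your key step for the join lower bound --- the crossing path $u_1v_1u_2v_2$ with colour sequence $c_1c_2c_1c_2$ forcing all vertices on one side to get distinct colours --- is exactly the paper's argument (the paper's own proof is in fact terser and leaves the upper-bound verification implicit, which you spell out).

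One auxiliary claim in your part (ii) is false, though: it is not true that every induced cycle of $G_1\vee G_2$ lies inside $G_1$ or inside $G_2$. A triangle formed by an edge of $G_1$ together with any vertex of $G_2$, or a $4$-cycle $u_1v_1u_2v_2$ with $u_1u_2\notin E(G_1)$ and $v_1v_2\notin E(G_2)$, is chordless and straddles the join; the ``extra join edges become chords'' argument is valid only for induced $P_4$'s, whose endpoints or a non-consecutive pair would have to be joined. The slip is harmless for your construction: a bicoloured cycle or bicoloured $4$-vertex path passing through a vertex of $G_2$ would have to repeat that vertex's private colour, which is impossible, and paths avoiding $G_2$ are handled by the colouring of $G_1$; indeed your verification (iii) already excludes the pattern $xyxy$ on $4$-vertex paths and bicoloured cycles, so together with properness it yields the star and acyclic conditions and part (ii) can simply be replaced by that observation.
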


The two following lemmas deal with spiders and quasi-spiders and are proved in Section \ref{secao-tecnica}.
We will consider $\chi_a(G[R])=\chi_{st}(G[R])=0$ whenever $R=\emptyset$.

\begin{lemma}[$\chi_a$ and $\chi_{st}$ for spiders]\label{lema-spid-as}
Let $G$ be a spider $(R,C,S)$, where $|C|=|S|=k$.
Then $\chi_a(G)=\chi_a(G[R])+k$ and $\chi_{st}(G)=\chi_{st}(G[R])+k$, unless
$R=\emptyset$ and $G$ is thick, when in this case, $\chi_{st}(G)=k+1$. Moreover, $\pi(G)=\chi_{st}(G)$.
\end{lemma}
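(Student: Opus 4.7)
The plan is to obtain each equality by matching upper and lower bounds, and to deduce $\pi(G)=\chi_{st}(G)$ by showing the star coloring I construct is already nonrepetitive. For the lower bounds, the key observation is that the complete join between $R$ and $C$ forces the palettes used on these two sets to be disjoint in any proper coloring. Since $C$ is a $k$-clique at least $k$ colors appear on it, and the restriction of any acyclic/star coloring to $R$ is an acyclic/star coloring of $G[R]$, so at least $\chi_a(G[R])$ or $\chi_{st}(G[R])$ colors appear on $R$. This yields $\chi_a(G)\geq\chi_a(G[R])+k$ and $\chi_{st}(G)\geq\chi_{st}(G[R])+k$. For the exception (thick spider, $R=\emptyset$), any star coloring with only $k$ colors must place a permutation of them on $C$ and then force each $s_i$, adjacent to every $c_j$ with $j\neq i$, to reuse the color of $c_i$; the pair of color classes indexed by $i,j$ then induces the path $s_ic_jc_is_j$, a $P_4$ on two colors, violating the star property. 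Thus $\chi_{st}(G)\geq k+1$.

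For the matching upper bounds, I would fix an optimal acyclic/star coloring of $G[R]$ on palette $\{1,\ldots,p\}$ and introduce $k$ fresh colors for $C$ with $c_i\mapsto p+i$. In a thin spider I complete with the cyclic shift $s_i\mapsto p+((i\bmod k)+1)$: this is proper since $s_i$'s only $C$-neighbor is $c_i$, and a short case analysis shows each pair of color classes in $G$ induces either a single edge, a $P_3$, or inherits its forest-of-stars structure from the coloring of $G[R]$. In a thick spider with $R\neq\emptyset$ I instead paint every vertex of $S$ with a single color $\ell\in\{1,\ldots,p\}$; because $S$ has no neighbor in $R$, each pair involving a $C$-color reduces to a star centered at some $c_i$, and pairs inside $\{1,\ldots,p\}$ inherit the star property from $G[R]$. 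For the exception case I use an extra color $k+1$ on $S$ for the star construction, and $s_i\mapsto i$ for the acyclic construction (each pair of classes then induces a $P_4$, which is still a forest).

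Since $\chi_{st}(G)\leq\pi(G)$ always, the equality $\pi(G)=\chi_{st}(G)$ reduces to exhibiting a nonrepetitive coloring of $G$ with $\chi_{st}(G)$ colors. I would feed a nonrepetitive coloring of $G[R]$ into the construction above (available from the inductive hypothesis on the decomposition) and verify that the resulting star coloring of $G$ is also nonrepetitive. The structural fact I use is that each $s_i$ has all its neighbors in $C$, so on any path of $G$ every $s$-vertex is either an endpoint or flanked by two $c$-vertices, and a path avoiding $C$ is contained in $G[R]$ or is a single isolated vertex of $G[R\cup S]$. A hypothetical $2m$-repetition is ruled out for $m=1$ by properness and for $m=2$ by the star property already established; for $m\geq 3$, the interior positions lie in the clique $C$, where the fresh colors $p+1,\ldots,p+k$ are pairwise distinct and matching them would identify the same vertex in the path, or in $G[R]$, where the chosen coloring is nonrepetitive. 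The main delicacy, and the step I expect to be the main obstacle, is the thin spider with $R=\emptyset$, because there each fresh color occurs twice on the spider (at some $c_j$ and at $s_{j-1}$); here one has to argue more carefully that any would-be repetition would still force an identification between two interior $c$-positions, contradicting the distinctness of the $k$ fresh colors along the clique portion of the path.
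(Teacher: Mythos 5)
Your overall route is the same as the paper's (fresh colors on $C$, reuse colors on $S$, separate treatment of thin versus thick and of the $R=\emptyset$ thick exception), and you are in fact more explicit than the paper about the lower bounds and about why the constructed star coloring is nonrepetitive. However, one concrete step fails as written: the thin-spider completion by the cyclic shift $s_i\mapsto p+((i\bmod k)+1)$ is not a star coloring when $k=2$. For $k=2$ the vertices $s_1,c_1,c_2,s_2$ induce a $P_4$, and your shift colors it $p+2,\,p+1,\,p+2,\,p+1$, i.e.\ a bicolored induced $P_4$ (and simultaneously a repetition $xx$ with $x=(p+2,p+1)$), so the claimed ``single edge, $P_3$, or inherited'' case analysis is false in this boundary case, whether or not $R=\emptyset$. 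The repair is easy and worth stating: for $k=2$ a spider is thin and thick at the same time (relabel $s_1\leftrightarrow s_2$), so you may run your thick-spider completion instead --- color all of $S$ with one color of $R$ when $R\neq\emptyset$, and when $R=\emptyset$ the graph is a $P_4$, which falls under the stated exception with $\chi_{st}=k+1=3$. For $k\geq 3$ your shift and the accompanying verification are correct.

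Two smaller remarks. First, your proof of $\pi(G)=\chi_{st}(G)$ needs a nonrepetitive coloring of $G[R]$ with only $\chi_{st}(G[R])$ colors; this is not automatic for an arbitrary graph $G[R]$ (there are trees with $\pi>\chi_{st}$), so the appeal to the inductive hypothesis along the decomposition is genuinely needed --- the paper glosses over this with ``the same arguments can be used to $\pi(G)$,'' so you are no worse off, but make the dependence explicit. Second, your lower-bound arguments (disjoint palettes across the join $R$--$C$, the clique $C$, restriction to $R$, and the bicolored $s_ic_jc_is_j$ in a thick spider colored with only $k$ colors when $R=\emptyset$) are correct and are a useful addition: the paper's own proof only exhibits the colorings and leaves optimality implicit.
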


\begin{lemma}[$\chi_a$ and $\chi_{st}$ for quasi-spiders]\label{lema-qspid-as}
Let $G$ be a quasi-spider $(R,C,S)$ such that $\min\{|C|,|S|\}=k$ and $\max\{|C|,|S|\}=k+1$. Let $H=K_2$ or $H=\overline{K_2}$ be the subgraph that replaced a vertex of $C \cup S$. Then
\[
  \chi_a(G)\ =\ 
  \begin{cases}
                \chi_a(G[R])+k+1,&\mbox{if $H\in C$},\\
                \chi_a(G[R])+k+1,&\mbox{if $H=K_2$, $G$ is thick}\\
                                 &\ \ \ \ \ \ \ \ \ \ \ \ \ \mbox{and $R=\emptyset$},\\
                \chi_a(G[R])+k,  &\mbox{otherwise},               
  \end{cases}
\]
\[
  \chi_{st}(G)\ =\ 
  \begin{cases} \chi_{st}(G[R])+k,  &\mbox{if $H\in S$ and $G$ is thin},\\
                \chi_{st}(G[R])+k,  &\mbox{if $H\in S$, $G$ is thick}\\
                                 &\ \ \ \ \ \ \ \ \ \ \ \ \ \mbox{and $R\not=\emptyset$},\\
                \chi_{st}(G[R])+k+2,&\mbox{if $H\in C$, $G$ is thick}\\
                                 &\ \ \ \ \ \ \ \ \ \ \ \ \ \mbox{and $R=\emptyset$},\\
                \chi_{st}(G[R])+k+1,&\mbox{otherwise}.
  \end{cases}
\]
Moreover, $\pi(G)=\chi_{st}(G)$.
\end{lemma}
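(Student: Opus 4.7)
My plan is to follow the approach of Lemma \ref{lema-spid-as}, treating the replacement subgraph $H=\{h_1,h_2\}$ as a local perturbation of the underlying spider and performing a case split on the four relevant parameters: the location of $H$ (in $C$ or in $S$), the type of $H$ ($K_2$ or $\overline{K_2}$), thin versus thick, and whether $R=\emptyset$. In each branch I would derive a lower bound by exhibiting a forbidden substructure and a matching upper bound by extending an optimal coloring of $G[R]$ along the spider pattern.

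For the lower bounds I would rely on two recurring obstructions. First, since every vertex of $R$ is joined to $C$, any proper coloring uses at least $|C|+\chi_a(G[R])$ colors on $R\cup C$; when $H\subseteq C$ this already gives $|C|=k+1$ and accounts for the $+k+1$ acyclic bound of the first branch. For the other branches I would locate an induced $C_4$ or bichromatic $P_4$ through $H$ that cannot be two-colored acyclically (resp. as a star forest); for instance, in the branch $H=K_2\subseteq S$, $G$ thick, $R=\emptyset$ one finds a $C_4$ on $\{h_1,h_2\}\cup\{c_i,c_j\}$ with a chord, forcing one extra color on top of the clique number. The delicate $+k+2$ branch for $\chi_{st}$, arising when $H\subseteq C$, $G$ is thick and $R=\emptyset$, should be obtained by counting induced $P_4$'s that pass through $h_1$ and $h_2$ and showing that a single extra color cannot simultaneously separate all of them; a parity or double-counting argument seems appropriate.

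The upper bounds proceed by explicit construction. Fix an optimal coloring of $G[R]$, assign one fresh color to each vertex of $C$, and reuse these $|C|$ colors on $S$ following the spider rule (thin: $s_i$ inherits the color of some $c_j$ with $j\neq i$; thick: $s_i$ inherits the color of $c_i$). The two vertices of $H$ are then colored by the same rule applied locally, introducing one or two additional colors only in the exceptional branches identified above. A routine check of each pair of color classes verifies that the coloring is acyclic (resp. a star coloring). To establish $\pi(G)=\chi_{st}(G)$ it suffices to observe that the star colorings produced above are already nonrepetitive: any $xx$-pattern would require a bichromatic walk on at least four vertices, but the rigid spider structure forces every two-colored induced subgraph to be a disjoint union of stars together with at most a short path, neither of which can carry a repetition.

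The main obstacle I anticipate is the $+k+2$ branch for $\chi_{st}$, since ruling out even a single extra color requires a sharper combinatorial argument than the palette-disjointness that handles the $+k+1$ cases. The rest of the case analysis, while lengthy, mirrors the proof of Lemma \ref{lema-spid-as} with the two vertices of $H$ playing the role of a single spider vertex whose coloring has one extra degree of freedom.
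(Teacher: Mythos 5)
Your overall strategy coincides with the paper's: extend an optimal colouring of $G[R]$, give fresh colours to $C$, reuse colours on $S$, and split into cases according to the position and type of $H$, thin versus thick, and whether $R=\emptyset$. However, the concrete reuse rule you state for $S$ is only the \emph{acyclic} rule, and it fails exactly in the branches that make this lemma delicate. In a thick (quasi-)spider, letting $s_i$ inherit the colour of $c_i$ creates the bicoloured path $s_i c_j c_i s_j$ (recall that the forbidden bicoloured $P_4$'s in a star colouring need not be induced), so your construction does not attain the claimed value $\chi_{st}(G[R])+k$ when $H\in S$, $G$ is thick and $R\neq\emptyset$. The correct move, as in the proof of Lemma \ref{lema-spid-as}, is to reuse a single colour of $R$ on all of $S$, and to spend one genuinely new colour on $S$ when $R=\emptyset$. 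This also disposes of the $+k+2$ branch without any parity or double-counting: when $H\in C$ one argues that $C$ must in effect carry $k+1$ colours (for $H=K_2$ it is a $(k+1)$-clique; for $H=\overline{K_2}$, identifying the colours of $h_1,h_2$ forces a bicoloured $C_4$ or $P_4$ through $S$), and then the thick, $R=\emptyset$ argument from the plain spider forces one further colour on $S$. Two smaller slips: in the thin case ``$s_i$ gets the colour of some $c_j$, $j\neq i$'' must be chosen consistently (if $s_i$ takes $c_j$'s colour and $s_j$ takes $c_i$'s, the path $s_i c_i c_j s_j$ is bicoloured); and in the branch $H=K_2\subseteq S$, $G$ thick, $R=\emptyset$, the lower bound $k+1$ is simply the size of the clique $\{h_1,h_2\}\cup(C\setminus\{c_i\})$, not one colour more than the clique number.

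The justification you offer for $\pi(G)=\chi_{st}(G)$ contains a genuine gap: a repetition $xx$ with block length $m\geq 3$ occupies a path on $2m$ vertices and may use up to $m$ distinct colours, so it is not a ``bichromatic walk'' and the star property alone does not exclude it. You must rule out the longer patterns directly against the specific colouring you built (alternating $S$--$C$ paths in the thick case, paths entering $R$, and so on), in the spirit of the explicit repetition-pattern analysis the paper carries out in the proof of Lemma \ref{teo-small-as}; the paper's own proof of the present lemma merely says ``the same arguments apply'', but your proposed shortcut is not a valid substitute. Note also that this part requires the colouring placed on $G[R]$ to be nonrepetitive, not merely an optimal star colouring, so the hypothesis $\pi(G[R])=\chi_{st}(G[R])$ (available recursively for the graph classes of the paper) must be carried along rather than taken for granted.
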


Lemma below determines the harmonious chromatic number for join and spider operations. Recall that $\chi_h$ for union operation is NP-hard {\cite{bodlaender89}}.

\begin{lemma}[$\chi_h$ for join and quasi-spiders]\label{lema-spid-h}
Let $G$ be a graph with $n$ vertices.
If $G$ is the join of two graphs $G_1$ and $G_2$, then $\chi_h(G)=n$.
If $G$ is a quasi-spider $(R,C,S)$ with $k=\max\{|C|,|S|\}$, then
\[
  \chi_h(G)\ =\ 
  \begin{cases} |R|+k+1, &\mbox{if $G$ is thin},\\
                n,      &\mbox{otherwise.}
  \end{cases}
\]
\end{lemma}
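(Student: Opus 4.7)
The plan is to split the lemma into the join, the thin quasi-spider, and the non-thin quasi-spider, all three resting on the same harmony-violation template: \emph{if two vertices $u,v$ share a color and some vertex $w$ is adjacent to both, then the pair $\{\mathrm{color}(u),\mathrm{color}(w)\}$ induces the two edges $uw,vw$, violating harmoniousness.} For the join $G = G_1 \vee G_2$, suppose $u\neq v$ receive the same color. Properness forces $uv\notin E(G)$, so $u,v$ lie in the same factor, say $V(G_1)$; then any $w \in V(G_2)$ is adjacent to both and triggers the template. Hence all $n$ vertices get distinct colors, and the matching upper bound is immediate.

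For the thin quasi-spider $(R,C,S)$, set $K = \max\{|C|,|S|\}$. I first observe that in any harmonious coloring $R\cup C$ uses $|R|+|C|$ distinct colors: any repetition inside $R\cup C$ either violates properness (for two adjacent vertices) or has a common neighbor in $C$ or $R\cup C$ witnessing the template. For $\chi_h(G)\leq |R|+K+1$ I color $R\cup C$ with fresh distinct colors, give all of $S$ one additional color, and, if $s_1$ was the replaced vertex, I give $s_1^b$ its own extra color; in the thin structure each $s\in S$ has at most two neighbors in $C$, so each color pair involving an $S$-color picks up only one edge. For the reverse bound, the template shows that no $s\in S$ can share a color with $r\in R$ (take $N\in N(s)\cap C$; both $r,s$ are adjacent to $N$) nor with $c'\in C$ non-adjacent to $s$ (choose $N\in N(s)\cap C$ with $c'N\in E(G)$). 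When $s_1$ has been replaced, the common neighbor $c_1$ of $s_1^a$ and $s_1^b$ forces an extra color. The three configurations (no replacement, $c_1$ replaced, $s_1$ replaced) each yield precisely $|R|+K+1$.

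For the non-thin case the goal is $\chi_h(G)=n$. After $R\cup C$ has been shown to receive distinct colors, the template kills every remaining identification: any two $s_i,s_j\in S$ miss only a single vertex of $C$ each and hence have many common neighbors in $C$, and an $s\in S$ sharing a color with a vertex of $R\cup C$ likewise finds a common-neighbor witness. The main obstacle is the bookkeeping around the replacement operation: when $c_1$ is replaced by $\overline{K_2}$, $C$ ceases to be a clique and the common-neighbor witness for a forbidden sharing must be chosen outside $\{c_1^a,c_1^b\}$; and the degenerate case $|C|=|S|=2$ (where thin and thick coincide as a $P_4$) must be placed under the thin branch so that $|R|+K+1$ returns the correct value $3$ rather than $n$.
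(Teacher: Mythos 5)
Your proposal is correct and follows essentially the same route as the paper's proof: the common-neighbor (harmony-violation) argument forces all vertices of a join, and all of $R\cup C$ in a (quasi-)spider, to get distinct colors, with one extra color sufficing for $S$ in the thin case and everything distinct in the thick case. In fact you are somewhat more careful than the paper's own proof, which only argues for genuine spiders (and thick spiders with $|C|>2$), whereas you explicitly track the $K_2/\overline{K_2}$ replacement and the degenerate $|C|=|S|=2$ situation.
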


\begin{lemma}[$\chi_c$ for union, join and quasi-spiders]\label{lema-spid-c}
Let $G_1$ and $G_2$ be two graphs. Then, $\chi_c(G_1\cup G_2)=\max\{\chi_c(G_1),\chi_c(G_2)\}$ and $\chi_c(G_1\vee G_2)=2$.
If $G$ is a quasi-spider, then $\chi_c(G)=2$.
\end{lemma}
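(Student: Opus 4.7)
The plan is to handle the three statements of the lemma separately by a direct analysis of the maximal cliques in each construction.

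For the disjoint union, I observe that the maximal cliques of $G_1\cup G_2$ are exactly the maximal cliques of $G_1$ together with the maximal cliques of $G_2$. Hence any clique coloring of $G_1\cup G_2$ restricts to a clique coloring of each $G_i$, giving the lower bound, and conversely optimal clique colorings of $G_1$ and $G_2$ over a shared palette combine into a clique coloring of the union, giving the upper bound.

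For the join (assuming both sides nonempty, since otherwise the statement is trivial), every vertex of $V(G_1)$ is adjacent to every vertex of $V(G_2)$, so any clique of $G_1\vee G_2$ contained in a single side can be extended by a vertex of the opposite side. Thus every maximal clique of $G_1\vee G_2$ meets both sides, and assigning color $1$ to $V(G_1)$ and color $2$ to $V(G_2)$ yields a $2$-clique-coloring; two colors are necessary because $G_1\vee G_2$ contains an edge.

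For the quasi-spider $G$ with underlying partition $(R,C,S)$, I would first enumerate its maximal cliques. In the thin case they are either of the form $\{s_i,c_i\}\cup K_R$ or of the form $C\cup K_R$, where $K_R$ is a maximal clique of $G[R]$ (with $K_R=\emptyset$ if $R=\emptyset$); in the thick case they are either of the form $\{s_i\}\cup(C\setminus\{c_i\})$ or of the form $C\cup K_R$. The baseline $2$-coloring assigns color $1$ to $C$ and color $2$ to $R\cup S$. Every clique of the first form contains both a vertex of $C$ (color $1$) and a vertex of $S$ (color $2$), and every clique $C\cup K_R$ is bichromatic provided $K_R\neq\emptyset$. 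The only problematic subcase is when $R=\emptyset$ and $C$ is itself a maximal clique; the fix is to swap the color of a single vertex $c_j\in C$ to $2$, and in the thin case also swap $s_j$ to $1$ so that $\{s_j,c_j\}$ stays bichromatic. The hypothesis $|C|\geq 2$ guarantees that $C$ becomes bichromatic after this swap and that every thick clique $\{s_i\}\cup(C\setminus\{c_i\})$ still has both colors. The lower bound $\chi_c(G)\geq 2$ is immediate since $G$ contains an edge in $C$.

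The main obstacle I anticipate is the case analysis for the quasi-spider, where the replacement of at most one vertex of $C\cup S$ by $K_2$ or $\overline{K_2}$ slightly modifies the list of maximal cliques and forces one to check each configuration (thin vs.\ thick, replacement inside $C$ vs.\ inside $S$, $K_2$ vs.\ $\overline{K_2}$, and $R=\emptyset$ vs.\ $R\neq\emptyset$). Fortunately the two twin copies inherit the adjacencies of the replaced vertex, so they can simply inherit its color (or, if needed to break monochromaticity of $C$, receive opposite colors), which makes the verification routine but still requires it to be carried out explicitly in each subcase.
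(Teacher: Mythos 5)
Your overall strategy is the same as the paper's: the union and join arguments coincide with the paper's (restriction of maximal cliques, resp.\ the observation that every maximal clique of a join meets both sides), and for a quasi-spider with $R\neq\emptyset$ your baseline coloring ($C$ one color, $R\cup S$ the other) is exactly the paper's coloring. The trouble is confined to the $R=\emptyset$ subcase, which is precisely the part you declare ``routine'' and do not carry out, and there your specific recipe has concrete failures. (i) In the thick branch your claim that $|C|\geq 2$ suffices is false: if $|C|=2$ and you swap $c_j$ to color $2$ while keeping all of $S$ on color $2$, the maximal clique $\{s_i,c_j\}$ with $i\neq j$ is monochromatic; you need $|C|\geq 3$ there, or the separate remark that a thick spider with $k=2$ is (isomorphic to) a thin one. (ii) Your treatment of the replacement pair can also fail: if $\overline{K_2}$ replaces a vertex of $C$ in a thin spider with $R=\emptyset$ and, following your parenthetical, the two twins receive opposite colors while $s_j$ is swapped to color $1$, then the maximal clique formed by $s_j$ and the color-$1$ twin is monochromatic. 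A correct assignment exists, but it is not the one you describe, and nothing in the proposal selects it. (iii) A minor structural slip: in the thin case the maximal cliques are $\{s_i,c_i\}$ and $C\cup K_R$, not $\{s_i,c_i\}\cup K_R$, since $s_i$ has no neighbor in $R$ (this one is harmless for your verification).

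The paper sidesteps all of these subcases with a single uniform device that you may want to adopt: when $R=\emptyset$, pick a vertex $x\in C$ that is \emph{not} in the replacing subgraph $H$, and color $C\setminus\{x\}$ together with $N(x)\cap S$ with color $1$, and $x$ together with $S\setminus N(x)$ with color $2$. This works simultaneously for thin and thick quasi-spiders and for every choice of $K_2$ or $\overline{K_2}$ in $C\cup S$, so no case-by-case twin analysis or lower bound on $|C|$ beyond $|C|\geq 2$ is needed. As written, your proposal identifies the right coloring idea but leaves a genuine gap exactly where the lemma requires care.
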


\section{Coloring $(q,q-4)$-graphs}\label{secao-small}

In this section, suppose that $G$ is a $(q,q-4)$-graph which contains a separable $p$-component $H$, with bipartition $(H_1,H_2)$ and at most $q$ vertices, such that every vertex from $G-H$ is adjacent to all vertices in $H_1$ and non-adjacent to all vertices in $H_2$. Let $n'$ be the number of vertices of $G-H$. If $G-H$ is empty, consider $\chi_a(G-H)=\chi_{st}(G-H)=\pi(G-H)=0$. Given a coloring $\psi$ of $H$, let $k(\psi)$ be the number of colors of $\psi$.

Theorems below prove that determining the chromatic numbers $\chi_a$, $\chi_{st}$, $\chi_h$ and $\chi_c$ for item (c) of Theorem \ref{teo-primeval} is linear time solvable, if $q$ is a fixed integer.

\begin{lemma}\label{teo-small-as}
Given a coloring $\psi$ of $H$, let $k_2(\psi)$ be the number of colors with no vertex of $H_1$ and with no vertex of $H_2$ which is neighbor of two vertices from $H_1$ with the same color. Then
\[
  \chi_a(G)\ =\ \min\Big\{\min_{\psi\in C_a(H)}\Big\{k(\psi)+\max\{0,n'-k_2(\psi)\}\Big\},
\]
\[\ \ \ \ \ \ \ \ \ \ \ \ \ \ \ \ \ \ \ \ \ \ \ \ \ \ \ \ \ \ \ \ \ 
  \min_{\psi'\in C'_a(H)}\Big\{k(\psi')+\max\{0,\chi_a(G-H)-k_2(\psi')\}\Big\}\Big\}
\]
\[
  \chi_{st}(G)\ =\ \min\Big\{\min_{\psi\in C_{st}(H)}\Big\{k(\psi)+\max\{0,n'-k_2(\psi)\}\Big\},
\]
\[\ \ \ \ \ \ \ \ \ \ \ \ \ \ \ \ \ \ \ \ \ \ \ \ \ \ \ \ \ \ \ \ \ 
  \min_{\psi'\in C'_{st}(H)}\Big\{k(\psi')+\max\{0,\chi_{st}(G-H)-k_2(\psi')\}\Big\}\Big\}
\]
\[
  \pi(G)\ =\ \min\Big\{\min_{\psi\in C_\pi(H)}\Big\{k(\psi)+\max\{0,n'-k_2(\psi)\}\Big\},
\]
\[\ \ \ \ \ \ \ \ \ \ \ \ \ \ \ \ \ \ \ \ \ \ \ \ \ \ \ \ \ \ \ \ \ 
  \min_{\psi'\in C'_\pi(H)}\Big\{k(\psi')+\max\{0,\pi(G-H)-k_2(\psi')\}\Big\}\Big\}
\]
where $C_a(H)$, $C_{st}(H)$ and $C_\pi(H)$ are respectively the set of all acyclic, star and nonrepetitive colorings of $H$, and $C'_a(H)\subseteq C_a(H)$, $C'_{st}(H)\subseteq C_{st}(H)$ and $C'_\pi(H)\subseteq C_\pi(H)$ are respectively the subsets of acyclic, star and nonrepetitive colorings such that all vertices from $H_1$ receive distinct colors.
\end{lemma}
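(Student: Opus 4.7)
The plan is to prove each equality by a sandwich: the $\leq$ direction by exhibiting colorings of $G$ attaining each of the two minimands, and the $\geq$ direction by taking an optimal coloring of $G$ and inspecting its restriction to $H$. The argument hinges on one observation. If $v\in G-H$ receives color $c$, then because $v$ is adjacent to every vertex of $H_1$, no vertex of $H_1$ can share color $c$, and if any $u\in H_2$ carries color $c$ and has two neighbors $w_1,w_2\in H_1$ that share a color $c'$, the 4-cycle $v-w_1-u-w_2$ is bichromatic, violating even acyclicity. Thus every color used on $G-H$ in any acyclic/star/nonrepetitive coloring is \emph{savable} with respect to the induced coloring $\psi$ of $H$, i.e., counted by $k_2(\psi)$.

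For the $\leq$ direction I fix $\psi\in C_a(H)$ and build two candidate extensions. In the first, each vertex of $G-H$ receives its own color, drawing up to $\min\{n',k_2(\psi)\}$ colors from the savable colors of $\psi$ and using $\max\{0,n'-k_2(\psi)\}$ fresh ones; this realizes the first minimand. In the second, starting from an optimal acyclic coloring of $G-H$ with $\chi_a(G-H)$ colors, I identify as many of those classes as possible with savable colors of $\psi'\in C'_a(H)$, realizing the second minimand; the extra condition $\psi'\in C'_a(H)$ is needed because when a color repeats in $G-H$, two repeated colors in $H_1$ would create a bichromatic 4-cycle. Acyclicity of the combined coloring is verified by noting that any bichromatic cycle is either entirely inside $H$ (excluded since $\psi$ is acyclic), entirely inside $G-H$ (excluded since that sub-coloring is acyclic), or crosses between them, in which case it must use a vertex of $H_1$ together with an $H_2$-vertex of the cycle's $G-H$ color, excluded by savability.

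For the $\geq$ direction I take an optimal acyclic coloring $\phi$ of $G$, set $\psi=\phi|_H$, let $R$ be the set of colors used on $G-H$ that already appear in $\psi$ (necessarily savable, so $|R|\leq k_2(\psi)$) and $N$ the remaining colors on $G-H$; the total color count is $k(\psi)+|N|$. If $\phi$ is injective on $G-H$ then $|R|+|N|=n'$, so $|N|\geq n'-k_2(\psi)$, matching the first minimand. Otherwise some color repeats on $G-H$, which by the $C_4$ obstruction forces $\psi\in C'_a(H)$; the restriction of $\phi$ to $G-H$ is itself acyclic and uses at least $\chi_a(G-H)$ colors, yielding $|N|\geq \chi_a(G-H)-k_2(\psi)$ and matching the second minimand. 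The same template handles $\chi_{st}$ and $\pi$ after replacing $C_a$ by $C_{st}$ or $C_\pi$, and here lies the main obstacle: one must check that the savability condition, originally tailored to bichromatic 4-cycles, still blocks bichromatic $P_4$'s and nonrepetitive patterns. Since $H_2$ has no edges to $G-H$, every cross-boundary problematic substructure must enter and leave $H$ through $H_1$, and I would argue case by case that each such configuration collapses either to the $C_4$ obstruction already controlled by $k_2$, or to a star/nonrepetitive obstruction internal to $H$ or to $G-H$ and thus ruled out by the assumed type of the restricted coloring.
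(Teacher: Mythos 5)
Your skeleton is the same as the paper's: the notion of ``savable'' colors counted by $k_2$, the bichromatic-$C_4$ argument forcing the dichotomy between ``all of $G-H$ gets distinct colors'' and ``all of $H_1$ gets distinct colors'', and the two-sided counting; your lower-bound direction is essentially fine for all three parameters. The genuine gap is exactly where you locate ``the main obstacle'' and then defer it. Your claimed collapse --- that every cross-boundary bichromatic $P_4$ or repetitive pattern reduces either to the $C_4$ obstruction controlled by $k_2$, or to an obstruction internal to $H$ or to $G-H$ --- is false if one only uses the adjacency pattern between $G-H$ and $(H_1,H_2)$. Concretely, take $x\in G-H$ colored $c_1$, $u\in H_1$ colored $c_2$, and $w_1,w_2\in H_2$ colored $c_1,c_2$ with $uw_1,w_1w_2\in E$ and $uw_2\notin E$: then $x\,u\,w_1\,w_2$ is a bichromatic induced $P_4$ which is not blocked by savability of $c_1$ (that only forbids an $H_2$-vertex colored $c_1$ having \emph{two} equally colored $H_1$-neighbors, and here only $u$ is involved), and inside $H$ it is merely a harmless bicolored $P_3$, so it is invisible to $\psi$ being a star coloring of $H$. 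Such configurations are ruled out only by the hypothesis that $H$ is a \emph{separable $p$-component}: by Lemma \ref{lemma-split} its characteristic graph is a split graph, so two adjacent vertices of $H_2$ lie in a common maximal homogeneous set and hence have identical neighborhoods in $H_1$, which kills the induced $P_3$ $u\,w_1\,w_2$. Your proposal never invokes this structure, and without it the stated formulas for $\chi_{st}$ and $\pi$ cannot be established (they genuinely fail for a module-like $H$ lacking the split structure), so the upper-bound verification for the star case is missing a necessary ingredient, not just detail.

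The same omission is even more serious for $\pi$. The formula for $\pi(G)$ rests on the claim that in any star coloring of $G$ there is no repetition pattern $x_1\ldots x_p x_1\ldots x_p$ on a path meeting both $G-H$ and $H$, for \emph{arbitrary} $p$; cross-boundary repetitive paths are long objects and do not reduce to $C_4$'s or $P_4$'s. In the paper this is the longest part of the proof: one tracks the first edge $v_kv_{k+1}$ from $G-H$ into $H_1$, and then analyzes which maximal homogeneous sets $M_1\subseteq H_1$, $M_2\subseteq H_2$ contain the subsequent vertices, repeatedly using Lemma \ref{lemma-split} to manufacture either a monochromatic edge or a bichromatic $P_4$. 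Writing ``I would argue case by case'' does not discharge this; as it stands your proposal omits the core of the argument for both the star and the nonrepetitive statements.
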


\begin{lemma}\label{teo-small-h}
Let $C_h(H)$ be the set of all harmonious colorings of $H$ such that all vertices of $H_1$ have distinct colors. Then
$$
  \chi_h(G)\ =\ n'+\min_{\psi\in C_h(H)}\Big\{k(\psi)\Big\}
$$
\end{lemma}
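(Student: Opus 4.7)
The plan is to establish the two inequalities $\chi_h(G) \le n' + k^{\ast}$ and $\chi_h(G) \ge n' + k^{\ast}$, where $k^{\ast} = \min_{\psi \in C_h(H)} k(\psi)$. For the upper bound, I would pick an optimal $\psi \in C_h(H)$ and extend it to a coloring $\phi$ of $G$ by assigning each of the $n'$ vertices of $G - H$ its own brand-new color disjoint from the palette of $\psi$. I would then verify harmoniousness pair by pair: (i) two fresh color classes are each singletons in $G - H$, so they induce at most one edge; (ii) if one color is fresh (a single vertex $u \in G - H$) and the other is an old color $c'$, the only possible edges are from $u$ into $H$, and since $u$ is adjacent precisely to $H_1$ and $\psi$ colors $H_1$ injectively, at most one such edge exists; (iii) two old colors only involve edges inside $H$, where the harmoniousness of $\psi$ applies. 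Properness is similar, yielding $\chi_h(G) \le n' + k^{\ast}$.

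For the lower bound, let $\phi$ be any harmonious coloring of $G$, and establish three facts. (a) The $n'$ vertices of $G - H$ receive pairwise distinct colors: if $u \ne v$ in $G - H$ shared a color, picking any $y \in H_1$ (nonempty since $H$ is $p$-connected and every crossing $P_4$ has its middle vertices in $H_1$) would place two edges $uy, vy$ in the same color pair. (b) By the symmetric argument against any $u \in G - H$, the vertices of $H_1$ also receive pairwise distinct colors under $\phi|_H$. (c) The palette used on $G - H$ is disjoint from the palette used on $H$: disjointness from the $H_1$-palette is immediate from properness; disjointness from the $H_2$-palette follows because every $x \in H_2$ is adjacent to some $y \in H_1$ (applying $p$-connectedness of $H$ to the bipartition $(\{x\}, V(H) \setminus \{x\})$ forces $x$ into an induced $P_4$ of $H$, and the separable structure then places $x$ as an endpoint adjacent to a middle $y \in H_1$), so if $u \in G - H$ shared a color with such $x$, both $uy$ and $xy$ would land in one color pair. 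Facts (a)--(c) combined show $|\phi| \ge n' + k(\phi|_H)$, and by (b) we have $\phi|_H \in C_h(H)$, so $k(\phi|_H) \ge k^{\ast}$, completing the inequality.

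The main obstacle I anticipate is justifying, in claim (c), that every vertex of $H_2$ has a neighbor in $H_1$; I would need to argue this purely from $p$-connectedness together with the separable-$p$-component structure, possibly via a small case analysis to rule out the pathological situation in which some $H_2$-vertex has all its neighbors inside $H_2$. The minor side issue of the edge case $n' = 0$ (where claim (b) is not directly available by the symmetric argument) is bypassed by the decomposition scheme of Theorem \ref{teo-primeval}, since case (c) there is only invoked when $G - H$ is a strictly smaller $(q,q-4)$-graph; the case $n' = 0$ is absorbed into case (d).
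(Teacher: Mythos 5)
Your overall strategy is the same as the paper's: restrict any harmonious coloring of $G$ to $H$, observe that the $n'$ vertices of $G-H$ must get pairwise distinct colors (common neighbor in $H_1$), that these colors cannot be reused on $H$, that $H_1$ must be rainbow, and then minimize over the constantly many colorings of $H$; your upper-bound extension with $n'$ fresh colors is also how the paper's algorithm colors $G-H$. In fact your lower bound is slightly cleaner than the paper's, which introduces sets $c_Y,c_Z$ of reusable colors that turn out to be vacuous ($c_Z=\emptyset$) for exactly the reason you isolate.

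The one real soft spot is the justification you sketch for the key sub-claim that every $x\in H_2$ has a neighbor in $H_1$. Applying $p$-connectedness to the bipartition $(\{x\},V(H)\setminus\{x\})$ only yields an induced $P_4$ through $x$; that $P_4$ may lie entirely inside $H_2$, in which case the separability condition (which constrains only $P_4$'s crossing $(H_1,H_2)$) tells you nothing, so the argument as written does not close. The intended route is Lemma \ref{lemma-split}: the characteristic graph of the separable $p$-component $H$ is a split graph with the clique side coming from $H_1$ and the stable side from $H_2$; since $H$ is connected, no (image of a) maximal homogeneous set is isolated in the characteristic graph, and a non-isolated vertex of the stable side has all its neighbors on the clique side, which translates back into a neighbor of $x$ in $H_1$. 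This is precisely the fact the paper itself invokes without proof ("every vertex of $H_2$ has a neighbor in $H_1$") in the proof of Lemma \ref{teo-small-c}. Your handling of the $n'=0$ edge case via case (d) of Theorem \ref{teo-primeval} is fine and consistent with how the paper's algorithm treats small graphs.
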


\begin{lemma}\label{teo-small-c}
If $G-H$ is not empty, then $\chi_c(G)=2$ (coloring the vertices of $G-H$ and $H_2$ with the color 1 and the vertices of $H_1$ with the color 2). If $G-H$ is empty, then $G$ has less than $q$ vertices and
$$
  \chi_c(G)\ =\ \min_{\psi\in C_c(H)}\Big\{k(\psi)\Big\},
$$
where $C_c(H)$ is the set of all clique-colorings of $H$.
\end{lemma}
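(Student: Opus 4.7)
My plan is to handle the two parts of Lemma~\ref{teo-small-c} separately. Part~2 is essentially definitional: when $G-H$ is empty we have $G=H$, and by case~(c) of Theorem~\ref{teo-primeval} $|V(H)|\le q$, so the identity $\chi_c(G)=\min_{\psi\in C_c(H)}k(\psi)$ is simply the definition of $\chi_c(H)$, and the minimum can be computed by exhaustive search since $q$ is fixed.

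For Part~1, assume $G-H\ne\emptyset$. I need to verify that the proposed 2-coloring --- color~1 on $V(G-H)\cup H_2$ and color~2 on $H_1$ --- is a valid clique coloring, by ruling out monochromatic maximal cliques. Since $V(G-H)$ is non-adjacent to $H_2$ in $G$, a clique cannot meet both, so a monochromatic maximal clique $K$ must satisfy $K\subseteq V(G-H)$, $K\subseteq H_1$, or $K\subseteq H_2$. The first two cases are direct clique extensions: if $K\subseteq V(G-H)$ then since $H_1\ne\emptyset$ (the middle pair of any crossing $P_4$ in the $p$-connected graph $H$ lies in $H_1$ by the separable condition), any $u\in H_1$ is adjacent to every vertex of $V(G-H)\supseteq K$, producing the larger clique $K\cup\{u\}$; if $K\subseteq H_1$, any $v\in V(G-H)$ is adjacent to every vertex of $H_1\supseteq K$, producing $K\cup\{v\}$.

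The decisive case is $K\subseteq H_2$. To contradict maximality I would exhibit $u\in H_1$ adjacent to every vertex of $K$. My strategy is to prove two structural facts about separable $p$-components: (i) $H_2$ is an independent set of $H$, and (ii) every $v\in H_2$ has at least one neighbor in $H_1$. Fact~(i) reduces $K$ to a singleton $\{v\}$, and fact~(ii) then supplies the required $u$. For~(i), an edge $ab\subseteq H_2$ would force one of $a,b$ into the middle pair of any $P_4$ it belongs to, violating the defining separable condition that middle pairs of crossing $P_4$'s lie in $H_1$; I would rule this out by invoking $p$-connectivity on a bipartition carefully chosen so that the guaranteed crossing $P_4$ must pass through both $a$ and $b$. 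For~(ii), applying $p$-connectivity to the bipartition $(\{v\},V(H)\setminus\{v\})$ and eliminating the case where every $P_4$ through $v$ lies entirely in $H_2$ (using the separable structure) places $v$ at the endpoint of a crossing $P_4$ whose adjacent middle vertex lies in $H_1$.

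The main obstacle is the structural bookkeeping, particularly in~(i): $p$-connectivity only asserts \emph{some} crossing $P_4$ for each bipartition, so one must design the bipartitions so that the $P_4$ is forced through the hypothetical $H_2$-edge, and then do a small case analysis on the three possible positions of consecutive vertices of a $P_4$ that could host the edge. Once these structural facts are established, the rest of Part~1 is an elementary clique-extension argument and Part~2 is immediate.
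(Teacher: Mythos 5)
Your reduction of the decisive case $K\subseteq H_2$ rests on fact (i), and that fact is false: in a separable $p$-component, $H_2$ need \emph{not} be an independent set of $H$. What Lemma \ref{lemma-split} actually gives (and what the paper's proof uses) is only that the \emph{characteristic} graph of $H$ is split; inside a maximal homogeneous set contained in $H_2$ there may be edges. A concrete counterexample: let $H$ have vertices $w_1,w_2,x,y,z$ and edges $w_1w_2$, $w_1x$, $w_2x$, $xy$, $yz$ (a $P_4$ whose endpoint is substituted by a $K_2$). Its only induced $P_4$'s are $w_1xyz$ and $w_2xyz$, so $H$ is $p$-connected and separable with $H_1=\{x,y\}$, $H_2=\{w_1,w_2,z\}$, yet $H_2$ contains the edge $w_1w_2$; adding one vertex adjacent exactly to $x$ and $y$ produces a graph $G$ of type (c) in Theorem \ref{teo-primeval} with this $H$. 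For the same reason your plan for proving (i) — choosing bipartitions so that the guaranteed crossing $P_4$ passes through a hypothetical $H_2$-edge — cannot succeed: the separability condition constrains only crossing $P_4$'s with respect to the fixed bipartition $(H_1,H_2)$, and an $H_2$-edge can sit inside a homogeneous set without ever appearing as a middle pair of such a $P_4$.

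The lemma is still true, and the repair is exactly the paper's argument: if two vertices of $H_2$ are adjacent then, since the $H_2$-side of the characteristic graph is stable, they lie in a common maximal homogeneous set and hence have the same neighbours outside it; combined with the fact that every vertex of $H_2$ has a neighbour in $H_1$ (which does hold, essentially by your fact (ii) or simply by connectivity of the $p$-connected graph $H$), any clique contained in $H_2$ — a single vertex or a clique inside one homogeneous set — has a common neighbour in $H_1$ and is therefore not maximal. Your handling of the cases $K\subseteq V(G-H)$ and $K\subseteq H_1$ and of the case $G-H=\emptyset$ (exhaustive search over colorings of the bounded-size $H$) coincides with the paper; note only that the statement's claim that $G$ then has fewer than $q$ vertices relies on the bound from \cite{olariu98} that a $p$-connected $(q,q-4)$-graph has at most $q-1$ vertices, which your write-up does not address.
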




\begin{theorem}\label{teo-small}
If $G$ is a $P_4$-tidy or $(q,q-4)$-graph, then we can obtain a minimum acyclic-star-harmonious-clique coloring of $G$ and determine $\chi_a(G)$, $\chi_{st}(G)$, $\chi_h(G)$ and $\chi_c(G)$ in linear time.
\end{theorem}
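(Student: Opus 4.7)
The plan is to process the primeval (resp.\ modular) decomposition tree of $G$ furnished by Theorem~\ref{teo-primeval} (resp.\ Theorem~\ref{teo-primeval2}) in a bottom-up fashion, storing at each node $T$ the values $\chi_a(G_T)$, $\chi_{st}(G_T)$, $\pi(G_T)$, $\chi_c(G_T)$, $|V(G_T)|$, and---when needed at the root---$\chi_h(G_T)$, where $G_T$ is the subgraph induced by the descendants of $T$. The tree itself is built in linear time by the algorithm of \cite{olariu98}. At each leaf $T$, which is one of $\emptyset$, $K_1$, $P_5$, $\overline{P_5}$, or $C_5$, all the parameters are computed by brute force in $O(1)$ time.

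At every internal node I would merge the children's values using the lemmas already proved: Lemmas~\ref{teo-lyons}, \ref{lema-nonrep} and \ref{lema-spid-c} handle disjoint-union and join nodes; Lemmas~\ref{lema-spid-as}, \ref{lema-qspid-as}, \ref{lema-spid-h} and \ref{lema-spid-c} handle (quasi-)spider nodes; and Lemmas~\ref{teo-small-as}, \ref{teo-small-h} and \ref{teo-small-c} handle the separable $p$-component nodes (of size at most $q$) that appear only in the $(q,q-4)$ decomposition. The harmonious case is delicate because $\chi_h$ is NP-hard under disjoint union, so I restrict to connected $G$ as in Theorem~\ref{teo-main1}; however, none of Lemmas~\ref{lema-spid-h} or \ref{teo-small-h} recursively invokes $\chi_h$ on a child (they depend only on $|R|$, $|V(G-H)|$, and colorings of $H$), so propagating $\chi_h$ only through join, (quasi-)spider, and small-subgraph nodes is enough, and no NP-hard subproblem is ever touched.

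The main step is bounding the per-node work by a constant depending only on $q$. For union, join and (quasi-)spider nodes this is immediate from the closed-form expressions in the lemmas. For a separable $p$-component node, Lemmas~\ref{teo-small-as}--\ref{teo-small-c} ask one to enumerate all acyclic, star, nonrepetitive, harmonious, or clique colorings of $H$; since $|V(H)|\le q$, there are at most $q^q$ candidate partitions of $V(H)$, and for each one, checking the relevant coloring property together with computing $k(\psi)$ and $k_2(\psi)$ takes $\mathrm{poly}(q)$ time, giving a total cost $O(f(q))=O(1)$ per such node. Because the decomposition tree has $O(n)$ nodes, the whole algorithm runs in linear time, and an explicit optimal coloring is recovered by a standard top-down pass that remembers, at each node, which branch of the min---or which $\psi$---realized the optimum. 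The step I expect to be the main obstacle is precisely this separable-$p$-component case: while its cost is constant for fixed $q$, one must be careful that the two minima inside each formula of Lemmas~\ref{teo-small-as}--\ref{teo-small-c} are correctly implemented, i.e.\ that the decision between reusing colors of $H$ for $G-H$ and introducing fresh colors is driven by the correct inner optimization over $\psi$ versus $\psi'$, and that the optimum found on the child $G-H$ is compatible with the chosen $\psi$ (so that the concatenation is still acyclic, star, or nonrepetitive).
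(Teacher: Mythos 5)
Your proposal is correct and follows essentially the same route as the paper's (very terse) proof: compute the primeval/modular decomposition tree in linear time and combine the union/join, (quasi-)spider, and separable $p$-component lemmas at each node, with constant work per node for fixed $q$. In fact, your write-up supplies more detail than the paper does (the bottom-up propagation, the $O(f(q))$ enumeration bound, and the observation that $\chi_h$ never needs to be recursed through a disjoint-union node), all of which is consistent with the intended argument.
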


\begin{proof}
From Section \ref{secao-primeval}, we can obtain the primeval decomposition in linear time.
From lemmas of Sections \ref{secao-operacoes} and \ref{secao-small}, we are finished.
\end{proof}

\section{Technical proofs}\label{secao-tecnica}

We now provide the proofs of the most important results of the paper.
Firstly, we need to state a definition and recall a theorem from \cite{olariu01}.

\begin{definition}\label{def-modulo}
Let $G=(V,E)$ be a graph. A subset $M$ of $V$ with $1\leq|M|\leq|V|$ is called a \emph{module} if each vertex in $V - M$ is either adjacent to all vertices of $M$ or to none of them. A module $M$ is called a \emph{homogeneous set} if $1<|M|<|V|$. The graph obtained from $G$ by shrinking every maximal homogeneous set to one single vertex is called the \emph{characteristic graph} of $G$.
\end{definition}

A graph is called \emph{split graph} if its vertex set has a partition $(K,S)$ such that $K$ induces a clique and $S$ induces an independent set.

\begin{lemma}[\cite{olariu01}]\label{lemma-split}
A p-connected graph $G$ is separable if and only if its characteristic graph is a split graph.
\end{lemma}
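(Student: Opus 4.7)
My plan is to prove the equivalence by exploiting the lifting correspondence between induced $P_4$'s in $G$ and those in its characteristic graph $G^*$. The key preliminary observation is that any induced $P_4$ $wxyz$ in $G$ has its four vertices in four distinct maximal homogeneous sets: if, say, $x$ and $y$ were in the same module $M$, then $w$ being adjacent to $x$ would force $w$ adjacent to all of $M$ (hence to $y$) by the module property, contradicting the non-edge $wy$; the other pairs are ruled out analogously. Consequently, contracting modules preserves every induced $P_4$ of $G$ in $G^*$, and conversely any induced $P_4$ of $G^*$ lifts to an induced $P_4$ of $G$ by picking one representative from each module.

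For the direction $G^*$ split $\Rightarrow G$ separable, fix a split partition $(K,S)$ of $G^*$ and set $H_1 := \bigcup_{v'\in K} M_{v'}$ and $H_2 := \bigcup_{v'\in S} M_{v'}$, where $M_{v'}$ denotes the module contracted to $v'$. Given any induced $P_4$ $wxyz$ in $G$, its projection $w'x'y'z'$ is an induced $P_4$ in $G^*$. A case analysis over the sixteen placements of $(w',x',y',z')$ within $(K,S)$ --- using that $K$ is a clique, so any two of its vertices are adjacent, and $S$ is independent, so no two are --- eliminates every configuration except $w',z'\in S$ and $x',y'\in K$, since the edges $\{wx,xy,yz\}$ together with the non-edges $\{wy,wz,xz\}$ are jointly compatible with the split structure only in this arrangement. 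Therefore every induced $P_4$ of $G$ has its middle vertices in $H_1$ and its endpoints in $H_2$, witnessing separability.

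For the direction $G$ separable $\Rightarrow G^*$ split, fix a separable bipartition $(H_1,H_2)$ of $G$. One first shows that every maximal module $M$ lies entirely in $H_1$ or entirely in $H_2$: were this to fail with $u\in M\cap H_1$ and $v\in M\cap H_2$, the uniform attachment of outside vertices to $M$ combined with the p-connectedness of $G$ can be used to manufacture an induced $P_4$ placing $v$ as a middle vertex or $u$ as an endpoint, contradicting separability. Let $K$ and $S$ then be the images of $H_1$ and $H_2$ in $G^*$. To see that $K$ is a clique, take non-adjacent $u',v'\in K$ arising from modules contained in $H_1$; because $G^*$ is module-free and p-connected, a $P_4$ realising the non-edge $u'v'$ exists in $G^*$ and lifts to an induced $P_4$ in $G$ whose vertices cannot all satisfy the separable placement, a contradiction. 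The argument that $S$ is independent is symmetric.

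The main obstacle will be the forward direction, specifically the claim that maximal modules respect the bipartition and that the projected parts form a clique and an independent set; both steps require extracting a specific $P_4$ from the weaker existential guarantee of p-connectedness in such a way that it contradicts separability. The reverse direction, by contrast, is essentially a mechanical $P_4$ case-check against the split structure and should be routine.
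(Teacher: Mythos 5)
The paper does not actually prove this lemma --- it is imported from \cite{olariu01} --- so your proposal can only be judged on its own terms, and it has real gaps. First, your ``key preliminary observation'' is false as stated: in a p-connected graph a maximal homogeneous set can itself contain an induced $P_4$. For instance, let $M$ induce a $P_4$, add $u_1$ adjacent to all of $M$, $u_2$ adjacent only to $u_1$, and $u_3$ adjacent only to $u_2$; this graph is p-connected, $M$ is a maximal homogeneous set, and the $P_4$ inside $M$ simply disappears in the characteristic graph. Hence ``contracting modules preserves every induced $P_4$'' and your conclusion that \emph{every} induced $P_4$ of $G$ has its midpoints in $H_1$ are both wrong (in the example that $P_4$ lies entirely in $H_2$). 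This is repairable, because separability only constrains \emph{crossing} $P_4$'s, and a $P_4$ crossing the bipartition $H_1=\bigcup_{v'\in K}M_{v'}$, $H_2=\bigcup_{v'\in S}M_{v'}$ cannot lie inside a single maximal homogeneous set, so it does project to a $P_4$ of $G^*$; with that restriction your split-graph case analysis (midpoints forced into $K$, endpoints into $S$) is correct and this direction goes through.

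The converse, which is the substantive half, is not proved but only paraphrased. For the claim that a maximal homogeneous set $M$ cannot meet both $H_1$ and $H_2$, p-connectedness gives you, for the partition $(M,V\setminus M)$, \emph{some} crossing $P_4$ with exactly one vertex in $M$, into whose $M$-slot you may substitute $u$ or $v$; but the resulting $P_4$ may lie entirely inside $H_1$ or entirely inside $H_2$, in which case it is not crossing with respect to $(H_1,H_2)$ and ``$v$ as a middle vertex'' or ``$u$ as an endpoint'' contradicts nothing. You never explain how to force a $P_4$ that is crossing for $(H_1,H_2)$ through the prescribed vertices, and p-connectedness does not give $P_4$'s through two specified vertices. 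Likewise, for ``$K$ is a clique'' you invoke that $G^*$ is p-connected (unproved here) and that a non-adjacent pair of such a graph lies on a common induced $P_4$ (also unproved, and again the lifted $P_4$ yields a contradiction only if it is crossing). These missing steps --- that the separation is compatible with the maximal homogeneous sets and that the quotient parts form a clique and a stable set --- are precisely the content of the Jamison--Olariu analysis the lemma rests on, so as written the backward direction restates what must be shown rather than proving it.
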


The rest of the paper is dedicated to prove Theorem \ref{teo-small} and lemmas from Sections \ref{secao-primeval} and \ref{secao-operacoes}.

\subsection{Acyclic, star and nonrepetitive colorings}

We start with the proofs of Lemmas \ref{lema-nonrep}, \ref{lema-spid-as} and \ref{lema-qspid-as}.

\begin{proof}[Proof of Lemma \ref{lema-nonrep}]
If $G=G_1\cup G_2$, then every color of $G_1$ can be used in $G_2$, and vice-versa.
Thus, $\pi(G)=\max\{\pi(G_1),\pi(G_2)\}$.
So, let $G=G_1\vee G_2$. Suppose that $|V(G_1)|\geq 2$ and $|V(G)|\geq 2$. Let $a_1,b_1\in V(G_1)$ and $a_2,b_2\in V(G_2)$.
Suppose that $a_1$ and $b_1$ receive color $C_1$ and that $a_2$ and $b_2$ receive color $C_2$. Then we have the bicolored $P_4$ $a_1a_2b_1b_2$, which is the repetition pattern $C_1C_2C_1C_2$; a contradiction. So, (a) all vertices of $G_1$ have distinct colors; or (b) all vertices of $G_2$ have distinct colors.
\end{proof}

\begin{proof}[Proof of Lemma \ref{lema-spid-as}]
Let $G$ be a spider with partition $(R,C,S)$, such that $|C|=|S|=k$. 
A minimum acyclic coloring of $G$ can be easily obtained from an acyclic coloring of $G[R]$, by assigning a new color for each vertex in $C$ and finally by coloring each vertex of $S$ with any appropriated avaiable color of $C$. Thus, $\chi_a(G)=\chi_a(G[R])+k$.

On the other hand, to produce a star coloring of $G$, we first color optimally $G[R]$ and then assign one new color to each vertex of $C$.
If $G$ is thin, we color each vertex of $S$ with any appropriated available color of $C$.
If $G$ is thick and $R\not=\emptyset$, then we use one of the colors of $R$ to color every vertex of $S$. Then, $\chi_{st}(G)=\chi_{st}(R)+k$.
If $G$ is thick and $R=\emptyset$, then we have to add a new color and assign it to every vertex of $S$. By consequence, $\chi_{st}(G)=k+1$.
The same arguments can be used to $\pi(G)$.
\end{proof}

\begin{proof}[Proof of Lemma \ref{lema-qspid-as}]
Let $G$ be a quasi-spider with partition $(R,C,S)$ and $\min\{|C|,|S|\}=k$ and $\max\{|C|,|S|\}=k+1$.
Let $H=K_2$ or $H=\overline{K_2}$ be the subgraph that replaced a vertex of $C\cup S$ in the definition of quasi-spider.
As before, we obtain a minimum acyclic(star) coloring of $G$ by an acyclic(star) coloring $R$ and associating a new color to each vertex of $C$.
If $H=K_2$ and $H\in C$, then we need $k+1$ colors to color the vertices of $C$.
If $H=\overline{K_2}$, $H\in C$ and we use only one vertex to color $H$, then we have to give a new color to at least one vertex of $S$, otherwise we would have a bichromatic $C_4$ with $H$, this vertex of $S$ and a vertex of $R$ or $C$.

Then, if $H\in C$, then it is better to use $k+1$ colors on $C$ (independently if $H=K_2$ or $H=\overline{K_2}$). We will assume this.
If $H\not\in C$, then we use $k$ colors on $C$.
Now, we have to color the vertices of $S$.

At first, if $H\in C$, we can color each vertex of $S$ using a color of some vertex in $C$, without producing any bichromatic cycle (but we can produce a bichromatic $P_4$). Then, if $H\in C$, we need no further new colors to acyclic color $S$.

If $H\in S$ and $G$ is thin, we can always use a color of $C$ to color each vertex of $S$ without producing a bichromatic $P_4$.
If $H\in S$, $G$ is thick and $R\not=\emptyset$, we can use one color of $R$ to color all vertices of $S$ without producing a bichromatic $P_4$.
In these cases, we do not need further new colors to color $S$.

If $H=K_2\in S$, $G$ is thick and $R=\emptyset$, we have create a new color and give it to all vertices of $S$ (except one vertex of $H=K_2\in S$, whose color could be some color used in $C$), without producing a bichromatic $P_4$. In this case, we need only one new color. If $H=\overline{K_2}$, we can use a color of $C$.

If $H\in C$, $G$ is thick and $R=\emptyset$, we have to create a new color and  give it to all vertices of $S$. In this case, we need only one new color.
And we have finished all possibilities.
The same arguments can be used to $\pi(G)$.
\end{proof}

\begin{proof}[Proof of Lemma \ref{teo-small-as}]
Clearly, an acyclic-star coloring of $G$ induces an acyclic-star coloring of $G-H$.
Notice that if $x\in G-H$ and $v\in H_2$ have the same color and if $v$ has two neighbors $u_1$ and $u_2$ with the same color, then we have a bichromatic $C_4$ ($x-u_1-v-u_2$). So, this cannot happen in an acyclic coloring of $G$.

Also observe that, in an acyclic coloring of $G$, there not exist vertices $x,y\in G-H$ with the same color and vertices $u,v\in H_1$ with the same color (otherwise, we would have a bichromatic $C_4$ $x-u-y-v$). Therefore, we have two options: (a) each vertex of $G-H$ receives a distinct color or (b) each vertex of $H_1$ receives a distinct color. Firstly, assume (b).

Observe that there is no bichromatic $P_4$ with three vertices of $G-H$ and a vertex of $H_1$, since every vertex of $G-H$ is adjacent to every vertex of $H_1$. Furthermore, there is no bichromatic $P_4$ with one vertex of $G-H$ and two vertices of $H_1$, since from (b) these three vertices must have distinct colors.

As a conclusion, there is no bichromatic induced cycle with a vertex of $G-H$ and a vertex of $H_1$. Thus every minimum acyclic coloring of $G$ which satisfies (b) induces a minimum acyclic coloring of $G-H$.

By Lemma \ref{lemma-split}, if $G-H$ is not empty, then the characteristic graph of $H$ is a split graph ($H_1$ represents the clique and $H_2$ represents the independent set). Then, there is no induced $P_3$ with one vertex $u$ of $H_1$ and two vertices $w_1,w_2$ of $H_2$, since $w_1$ and $w_2$ must be in the same maximal homogeneous set, and therefore $u$ is adjacent to both or none of $w_1$ and $w_2$. By consequence, there is also no bichromatic induced $P_4$ with one vertex of $G-H$, one vertex $u$ of $H_1$ and two vertices $w_1,w_2$ of $H_2$ (otherwise we would have an induced $P_3$ $u-w_1-w_2$). Therefore, there is no bichromatic induced $P_4$ with a vertex of $G-H$ and a vertex of $H_1$. As a consequence, every minimum star coloring of $G$ which satisfies (b) induces a minimum star coloring of $G-H$.

Therefore, our work is to obtain a minimum acyclic (star) coloring of $G$ satisfying (b), given a minimum acyclic (star) coloring of $G-H$.
Let $\psi'_a$ and $\psi'_s$ be respectively an acyclic (star) coloring of $H$ such that the vertices of $H_1$ receive distinct colors.

Then we get respectively $k(\psi'_a)+\max\{0,\chi_a(G-H)-k_2(\psi'_a)\}$ and $k(\psi'_s)+\max\{0,\chi_{st}(G-H)-k_2(\psi'_s)\}$ colors for the acyclic-star colorings of $G$ respecting $\psi'_a$ and $\psi'_s$. Since $H$ has at most $q$ vertices, we can search in constant time the colorings $\psi'_a$ and $\psi'_s$ that minimize those values.

Now, assume (a). This case is easier than (b). Let $\psi_a$ and $\psi_s$ be respectively an acyclic-star coloring of $H$ (with no further restrictions). Then we get respectively $k(\psi_a)+\max\{0,|V(G-H)|-k_2(\psi_a)\}$ and $k(\psi_s)+\max\{0,|V(G-H)|-k_2(\psi_s)\}$ colors for the acyclic-star colorings of $G$ respecting (a), $\psi_a$ and $\psi_s$. Since $H$ has at most $q$ vertices, we can search in constant time the colorings $\psi_a$ and $\psi_s$ that minimize those values.

Finally, take the minimum between (a) and (b) to calculate $\chi_a$ and $\chi_{st}$.

To calculate $\pi(G)$, we just have to prove that every star coloring of $G$ has no repetition pattern with a vertex of $G-H$ and a vertex of $H$. To do this, consider by contradiction a star coloring of $G$ with a color repetition pattern $x_1\ldots x_p x_1\ldots x_p$ on vertices $v_1\ldots v_{2p}$ with vertices of $G-H$ and $H$. Without loss of generality, consider that $v_1\in G-H$. Let $v_k v_{k+1}$ be the first edge from $G-H$ to $H_1$. Clearly $k<p$ (otherwise, $v_{k+1-p}$ and $v_{k+1}$ have the same color $x_{k+1-p}$ and induce an edge, a contradiction).

Observe that $v_k$ and $v_{k+p}$ received color $x_k$, and $v_{k+p}$ and $v_{k+p+1}$ received color $x_{k+1}$.
If $v_{k+p}\in H_1$, then we have the edge $v_{k}v_{k+p}$ with colors $x_{k}$ and $x_{k}$, a contradiction.
If $v_{k+p+1}\in G-H$, then we have the edge $v_{k+1}v_{k+p+1}$ with colors $x_{k+1}$ and $x_{k+1}$, a contradiction.
If $v_{k+p}\in G-H$ and $v_{k+p+1}\in H_1$, then we have the bichromatic $P_4$ $v_k v_{k+1} v_{k+p}v_{k+p+1}$, a contradiction.
If $v_{k+p}\in H_2$ and $v_{k+p+1}\in H_1$, then, from Lemma \ref{lemma-split}, $v_{k+1}$ and $v_{k+p+1}$ are in the same maximal homogeneous set and, consequently, $v_{k+1} v_{k+p}$ is an edge and we have the bichromatic $P_4$ $v_k v_{k+1} v_{k+p}v_{k+p+1}$, a contradiction.

Finally, the last case we have to consider is when $v_{k+p}$ and $v_{k+p+1}$ are in $H_2$.
Observe that $v_{p+1},\ldots,v_{k+p+1}\in H_2$ (otherwise, we have some edge $v_{\ell}v_{p+\ell}$, $\ell\in\{1,\ldots,k\}$, whose vertices have the same color $x_{\ell}$, a contradiction) and are in the same maximal homogeneous set $M_2\subseteq H_2$ from Lemma \ref{lemma-split}.

Let $\ell\in\{1,\ldots,p\}$ be the minimum integer such that $v_{p+\ell}\not\in M_2$ (this integer must exist, otherwise $M_2$ contains all colors and we have some edge from $H_1$ to $M_2$ whose vertices have the same color, a contradiction). Let $M_1\subseteq H_1$ be the maximal homogeneous set which contains $v_{p+\ell}$. If $v_{\ell}\in G-H$, then we have the edge $v_{\ell}v_{p+\ell}$ whose vertices have the same color $x_{\ell}$, a contradiction. If $v_{\ell}\in H_1$, then $v_{\ell}v_{p+\ell}$ could not be an edge and, from Lemma \ref{lemma-split}, $v_{\ell}\in M_1$ and, consequently, we have the edge $v_{\ell}v_{p+\ell-1}$ and the bichromatic $P_4$ $v_{\ell-1}v_{\ell}v_{p+\ell-1}v_{p+\ell}$, a contradiction.

If $v_{\ell}\in H_2$, let $t<\ell$ be the maximum integer not in $H_2$ smaller than $\ell$.
Clearly $t$ exists and $v_{t+1},\ldots,v_{\ell}\in H_2$ and are in the same maximal homogenoeus set $M_2'\subseteq H_2$ from Lemma \ref{lemma-split}. If $v_t\in M_1$, then we have the edge $v_t v_{p+t}$ whose vertices have the same color $x_t$, a contradiction. If $v_t\not\in M_1$, then, from Lemma \ref{lemma-split}, $v_t v_{p+\ell}$ is an edge and we have the bichromatic $P_4$ $v_{\ell}v_t v_{p+\ell}v_{p+t}$, a contradiction.

\end{proof}

\subsection{Harmonious coloring}

\begin{proof}[Proof of Lemma \ref{lema-spid-h}]
Consider a harmonious coloring $C$ of $G$. If $G$ is the join of two graphs $G_1$ and $G_2$, then there is no two vertices $x$ and $y$ of $G_1$ with the same color, since they have a common neighbor in $G_2$. Similarly, we have the same for $G_2$. Since no color of $G_1$ can appear in $G_2$, we have that all vertices must have distinct colors in $C$.

Now consider $G$ a spider with partition $(S,C,R)$.
Consider an harmonious coloring $C$ of $G$.
Since $C$ induces a clique, all vertices of $C$ must receive different colors.
Since every vertex of $R$ is adjacent to every vertex of $C$, then no color of $C$ can occur in $R$ (and vice-versa).
If $|R|>1$, then two vertices of $R$ cannot have the same color, because they are adjacent to the same vertex in $C$ (otherwise, the coloring is not harmonious). Summarizing, 
no two vertices of $R\cup C$ are assigned to the same color. 

Consider now that $G$ is a thin spider.
Let $s_i$ be a vertex of $S$. Let $c_i$ be the vertex of $C$ that is adjacent to $s_i$.
The color of $s_i$ cannot be the color of $c_j$ of $C$, otherwise $(c_i,s_i)$ and $(c_i,c_j)$ would be edges with the same pair of color in their endpoints. For the same reason, the color of $s_i$ cannot be the color of any vertex $r_j$ of $R$. So, a new color, distinct from those used to color $R \cup C$, have to be used to color $S$. 
Then, $h(G)\leq|R|+|C|+1$.

Observe that by coloring every vertex of $S$ with this new color would not produce two edges with the same pair of colors in their endpoints, because this new color appears only in $S$ and there is a bijection between vertices of $S$ and $C$, i.e., each vertex of $S$ is connected to exactly one vertex of $C$ different from the others. Therefore, $h(G)\geq|R|+|C|+1$.

Consider now that $G$ is a thick spider with $|C|>2$. 
Again, as before, the colors used in $C \cup R$ cannot be reused to color the vertices of $S$. 
It remains to know if two vertices $s_i$ and $s_j$ of $S$ can receive the same color.
As $|C|>2$, there is a vertex $c_k$ of $C$ that is adjacent to $s_i$ and $s_j$.
Then, $s_i$ and $s_j$ must receive different colors on a harmonious coloring.
It means that all the vertices of $S$ have to be assigned to distinct colors, that is, $h(G)=|R|+|C|+|S|$.
As $|S|=|C|$, we have the result.
\end{proof}

\begin{proof}[Proof of Lemma \ref{teo-small-h}]
Let $c$ be an harmonious coloring of $G$.
Let $c_H$ be the restriction of $c$ to $H$.
Observe that as $c$ is harmonious, then $c_H$ is harmonious too. 
Therefore, every harmonious coloring of $G$ can be obtained from some harmonious coloring of $G[H]$.

Note that there is no two vertices $x$ and $y$ of $G-H$ with the same color in $c$, since $x$ and $y$ have a common neighbor in $H_1$.
Hence, all vertices of $G-H$ receive distinct colors.

Consider then a harmonious coloring $c_H$ of $G[H]$. We are looking for a harmonious coloring $c_G$ of $G$ that extends $c_H$ and uses a minimum number of colors. 
It is necessary to know which colors of $c_H$ can be used to color the vertices in $G-H$.
As all the vertices in $G-H$ are adjacent to all the vertices of $H_1$, then the colors that can be used are the ones in $c_H$ such that no vertex with this color is adjacent to other vertex colored with some color that appears in $H_1$.

Summarizing, let $c_1$ be the set of the colors in $c_H$ of the vertices in $H_1$.
Let $X$ be the subset of vertices in $H$ colored with a color of $c_1$.
Clearly, $H_1\subseteq X$.
Let $Y=X\cup N(X)$, where $N(X)$ is the set of the neighbors of the vertices in $X$.
Let $c_Y$ be the set of colors in $c_H$ with a vertex of $Y$.
Let $c_Z$ be the set of colors in $c_H$ that are not used in $Y$. That is, $c_Y\cup c_Z=c_H$ and $c_Y\cap c_Z=\emptyset$.

Clearly, we cannot use any color $\gamma$ from $c_Y$ in $G-H$, because either $\gamma$ is a color of $H_1$ or it exists already a vertex of $H$ with the color $\gamma$ which is
neighbor from some vertex in $H_1$. It is easy to see that we can use the colors of $c_Z$ in $G-H$.
With this, a harmonious coloring $c_G$ of $G$ that extends $c_H$ and use a smallest number of colors must use $|c_H|+\max\{|V(G)|-|H|-|c_Z|,0\}$ colors and it is obtained  by coloring each vertex of $G-H$ either using colors of $c_Z$ or adding new colors with respect to the colors used in $c_H$.

As the size of $H$ is less than $q$ (which is a constant independent from the size of $G$), we can get in constant time all the harmonious colorings $c_H$ of $H$ and calculate the minimum number of colors in a harmonious coloring of $G$ that extends $c_H$. By doing this, we get in constant time the harmonious chromatic number of $G$ and we can obtain on linear time in the number of vertices a harmonious coloring of $G$.
\end{proof}

\subsection{Clique coloring}

\begin{proof}[Proof of Lemma \ref{lema-spid-c}]
The proof is direct if $G=G_1\cup G_2$.
If $G$ is the join of two graphs $G_1$ and $G_2$, then it is easy to see that every maximal clique of $G$ must have vertices of $G_1$ and $G_2$, since, for every clique $C$ of $G_1$, $C\cup\{v_2\}$ (where $v_2\in G_2$) is a clique of $G$. Then, coloring the vertices of $G_1$ with color 1 and the vertices of $G_2$ with color 2, we have that every maximal clique receives two colors. 

Now suppose that $G$ is a quasi-spider with partition $(R,C,S)$. Suppose first that $R$ is not empty. The same argument below shows that there is no maximal clique of $G$ with vertices of $R$ and no vertex of $C$ and that there is no maximal clique of $G$ with vertices of $C$ and no vertex of $R$ or no vertex of $S$. Since there is no clique with two vertices of $S$, we can obtain a clique coloring of $G$ by coloring the vertices of $R$ and $S$ with color 1 and the vertices of $C$ with color 2.

Suppose now that $R$ is empty. In this case, is is possible that $C$ is a maximal clique.
Let $H=K_2$ or $H=\overline{K_2}$ be the subgraph that replaced a vertex of $C\cup S$ in the definition of quasi-spider. Let $x\in C-H$ be a vertex of $C$ that is not in $H$. Let $N(x)$ be the set of neighbors of $x$ in $S$. It is easy to see that coloring $C-\{x\}$ and $N(x)$ with color 1 and $x$ and $S-N(x)$ with color 2, we have that every maximal clique receives two colors. Then we have a 2-clique-coloring of $G$.
\end{proof}

\begin{proof}[Proof of Lemma \ref{teo-small-c}]
At first, suppose that $G-H$ is not empty. Then $H$ is a separable p-component and, by Lemma \ref{lemma-split}, the characteristic graph of $H$ is a split graph ($H_1$ ''\emph{reduces}'' to a clique and $H_2$ ''\emph{reduces}'' to an independente set). Also remember that every vertex of $H_2$ has a neighbor in $H_1$. Hence, if two vertices of $H_2$ induces an edge, then they are in the same homogeneous set and then they have a common neighbor in $H_1$. Consequently, there is no maximal clique with vertex set contained in $H_2$.

It is easy to see that there is no maximal clique of $G$ with vertices of $G-H$ and no vertex of $H_1$, since, for every clique $C$ of $G-H$, $C\cup\{v\}$ (where $v\in H_1$) is a clique of $G$. The same argument shows that there is no maximal clique of $G$ with vertices of $H_1$ and no vertex of $H_2$ or no vertex of $G-H$.Then, we can obtain a clique coloring of $G$ by coloring the vertices of $G-H$ and $H_2$ with color 1 and the vertices of $H_1$ with color 2.

Now, suppose that $G-H$ is empty. Since $H$ is a p-connected $(q,q-4)$-graph, then $H$ has at most $q-1$ vertices. Since $q$ is fixed, we can generate all possible clique-colorings in constant time and obtain the clique chromatic number.
\end{proof}



\begin{thebibliography}{}

\bibitem{alon02} N. Alon and J. Grytczuk and M. Haluszczak and O. Riordan. Nonrepetitive colorings of graphs. \emph{Random Structures and Algorithms} {\bf 21} (3–4) (2002), 336-–346.

\bibitem{bacso04} G. Bacs\'o and S. Gravier and A. Gy\'arfas and M. Preissman and A. Sebo, Coloring the maximal cliques of graphs, {\em SIAM Journal on Discrete Algorithms} {\bf 17}.3 (2004), 361--376.

\bibitem{albertson04} M. Albertson and G. Chappell and H. Kierstead and A. Kündgen and R. Ramamurthi, Coloring with no 2-Colored $P_4$'s, {\em The Electronic Journal of Combinatorics} {\bf 11} (2004).

\bibitem{asdre07} K. Asdre, K. Ioannidou e S. Nikolopoulos, The harmonious coloring problem is NP-complete for interval and permutation graphs, {\em Discrete Applied Mathematics} {\bf 155} (2007), 2377--2382.

\bibitem{babel97} L. Babel, On the $P_4$ structure of graphs, {\em Habilitationsschrift, Zentrum Mathematik, Technische Universität München} (1997).

\bibitem{olariu95} B. Jamison and S. Olariu, P-components and the homogeneous decomposition of graphs, {\em SIAM Journal on Discrete Mathematics} {\bf 8} (1995), 448--463.

\bibitem{olariu98} L. Babel and S. Olariu, On the structure of graphs with few $P_4s$, {\em Discrete Applied Mathematics} {\bf 84} (1998), 1--13.

\bibitem{olariu01} L. Babel and T. Kloks and J. Kratochvíl and D. Kratsch and H. Muller and S. Olariu, Efficient algorithms for graphs with few $P_4's$, {\em Discrete Mathematics} {\bf 235} (2001), 29--51.

\bibitem{barat08} J. Barát and D. Woody. Notes on Nonrepetitive Graph Colouring. \emph{The Electronic Journal of Combinatorics} {\bf 15} (2008).

\bibitem{bodlaender89} H. L. Bodlaender, Achromatic number is NP-complete for cographs and interval graphs, {\em Information Processing Letters} {\bf 31} (1989), 135--138.

\bibitem{BM08} A. Bondy and U.S.R. Murty, {\it Graph Theory}, Spring-Verlag Press, 2008.

\bibitem{borodin79} O. V. Borodin, On acyclic colorings of planar graphs, {\em Discrete Mathematics} {\bf 25} (1979), 211-–236.

\bibitem{campbell04} D. Campbell and K. Edwards, A new lower bound for the harmonious chromatic number, {\em The Australasian Journal of Combinatorics} {\bf 29} (2004), 99--102.

\bibitem{coleman86} T. F. Coleman, J. Y. Cai, The Cyclic Coloring Problem and Estimation of Sparse Hessian Matrices, {\em SIAM Journal on Algebraic and Discrete Methods} {\bf 7}.2 (1986), 221-–235.

\bibitem{reed04} G. Fertin and A. Raspaud and B. Reed, Star coloring of graphs, {\em Journal of Graph Theory} {\bf 47} (2004), 163-–182.

\bibitem{gebremedhin08} A. Gebremedhin and A. Tarafdar and A. Pothen and A. Walther, Efficient Computation of Sparse Hessians Using Coloring and Automatic Differentiation, {\em Informs Journal on Computing} {\bf 21} (2008), 209--223.

\bibitem{giakoumakis97} V. Giakoumakis and H. Roussel and H. Thuillier, On $P_4$-tidy graphs, {\em Discrete Mathematics and Theoretical Computer Science} {\bf 1} (1997), 17--41.

\bibitem{gryt07} J. Grytczuk. Nonrepetitive Colorings of Graphs — A Survey. \emph{International Journal of Mathematics and Mathematical Sciences}, vol. 2007, Article ID 74639, 10 pages, 2007. doi:10.1155/2007/74639.

\bibitem{gryt08} J. Grytczuk. Thue type problems for graphs, points, and numbers. \emph{Discrete Mathematics} {\bf 308} (2008), 4419--4429.

\bibitem{gryt10} J. Grytczuk and J. Przybyło and X. Zhu. Nonrepetitive List Colourings of Paths. emph{Random Structures and Algorithms} {\bf 38} (2010), 162--173.


\bibitem{kostochka78} A. V. Kostochka, Upper bounds of chromatic functions of graphs (in Russian), {\em Doctoral thesis, Novosibirsk} (1978).

\bibitem{kratochvil02} J. Kratochv\'il and Z. Tuza, On the complexity of bicoloring clique hypergraphs of graphs, {\em Journal of Algorithms} {\bf 45}.1 (2002), 40--54.

\bibitem{lyons11} A. Lyons, Acyclic and Star Colorings of Cographs, {\em Discrete Applied Mathematics}, to appear in 2011.

\bibitem{marx09} D. Marx and M. Schaefer. The complexity of nonrepetitive coloring. \emph{Discrete Applied Mathematics} {\bf 157} (2009), 13--18.

\end{thebibliography}
\end{document}